\documentclass[10pt,authoryear]{article}
\usepackage{appendix}
\usepackage{setspace}
\usepackage[font=small,  margin=2cm]{caption}
\usepackage[tbtags]{amsmath}
\usepackage{amsthm,amssymb,amsfonts}
\usepackage{natbib}
\usepackage{multirow}
\usepackage{lscape}
\usepackage{subfigure}
\usepackage{makecell}
\usepackage{exscale}
\usepackage{booktabs}
\usepackage{array}
\usepackage{fullpage}
\usepackage{url}
\usepackage{algorithm}
\usepackage{algpseudocode}
\usepackage{bm}
\usepackage{smile}
\usepackage{mathtools}
\usepackage{wrapfig}
\usepackage{lipsum}
\usepackage{threeparttable}
\usepackage{mathrsfs}
\usepackage{relsize}
\usepackage{dsfont}
\usepackage{multirow}
\usepackage{apalike}
\usepackage[usenames,dvipsnames,svgnames,table]{xcolor}
\usepackage[colorlinks=true,linkcolor=blue,urlcolor=blue,citecolor=blue]{hyperref}

\newtheorem{assumption}{Assumption}[section]
\numberwithin{equation}{section}
\numberwithin{theorem}{section}
\numberwithin{corollary}{section}
\numberwithin{definition}{section}
\renewcommand{\baselinestretch}{1.5}

\begin{document}

\title{\LARGE Distributed Learning for Principle Eigenspaces without Moment Constraints}

	\author{
	Yong He
		\footnotemark[1],
	Zichen Liu\footnotemark[1],
Yalin Wang\footnotemark[1]
	}
\renewcommand{\thefootnote}{\fnsymbol{footnote}}
\footnotetext[1]{Institute of Financial Studies, Shandong University, China; e-mail:{\tt heyong@sdu.edu.cn, zhaochangwei@mail.sdu.edu.cn}}

\maketitle

Distributed Principal Component Analysis (PCA) has been studied  to deal with the case when data are stored across multiple machines and communication cost or privacy concerns prohibit the computation of PCA in a central
location. However, the sub-Gaussian assumption in the related  literature is restrictive in real application where outliers or heavy-tailed data are common  in areas such as finance and macroeconomic.
In this article, we propose a distributed algorithm for estimating the principle eigenspaces without any moment constraint on the underlying distribution. We study the problem under the elliptical family framework and adopt the sample multivariate Kendall'tau matrix to extract eigenspace estimators from all sub-machines, which can be viewed as points in the Grassman manifold. We then find the ``center" of these points as the final distributed estimator of the principal eigenspace. We investigate the bias
and variance for the  distributed estimator and derive its  convergence rate which depends  on the effective rank and eigengap of the scatter matrix, and the number of submachines. We show that  the distributed estimator performs as if  we have full access of whole data. Simulation studies show that the distributed algorithm performs comparably with the existing one for light-tailed data, while showing great advantage for heavy-tailed data. We also extend our algorithm to the distributed learning of elliptical factor models and verify its empirical usefulness through  real application to a macroeconomic dataset.
\vspace{2em}

\textbf{Keyword:}  Elliptical distribution; Distributed learning; Grassman manifold; Spatial Kendall's tau matrix;  Principle Eigenspaces.

\section{Introduction}
Principal component analysis (PCA) is one of the most important statistical tool for dimension reduction, which extracts latent principal factors that contribute to the most
variation of the data. For the classical setting with fixed dimension, the consistency and asymptotic normality of empirical principal components have been raised since \cite{anderson1963asymptotic}. In the last decade, high-dimensional PCA gradually attracted the attention of statisticians,  see for example \cite{ONATSKI2012244}, \cite{Wei2017Asymp}, \cite{Kong2021Dis}, \cite{bao2022statistical}. The existing work on high-dimensional PCA typically assume the Gaussian/sub-Gaussian tail property of the underlying distribution, which is really an idealization of the complex random real world.
Heavy-tailed
data are common in research fields such as financial
engineering and biomedical imaging \citep{jing2012modeling,kong2015testing,fan2018large,li2022manifold}. Thus it is desperately needed to find a robust way to do principal component analysis for heavy-tailed data.
   Elliptical family provides one a proper way to capture the different tail behaviour of various common distributions such as Gaussian and $t$-distribution, and has been widely studied in various high-dimensional statistical problems, see for example, \cite{Han2014Scale,Yu2019robust,Hu2019High,Chen2021High}.
   \cite{han2018eca} presented a robust alternative to PCA, called Elliptical
Component Analysis (ECA), for analyzing high dimensional, elliptically distributed data, in which multivariate Kendall's tau matrix plays a central role. In essence, for elliptical distributions, the  eigenspace of the population Kendall's tau matrix coincides with that of the scatter matrix
 The multivariate Kendall's tau  is first introduced in \cite{choi1998multivariate} for testing dependence, and is later adopted for estimating covariance matrix  and principal components \citep{taskinen2012robustifying}. Properties of ECA in low dimension were considered by \cite{hallin2010optimal}, \cite{hallin2014efficient}, and  some recent works related to high-dimensional ECA include but not limited to \cite{feng2017high,Chen2021High,he2022large}. Noticeably, \cite{he2022large} extended ECA to factor analysis under the framework of elliptical factor model.

With rapid developments of information and technology, the modern datasets exhibit the characteristic of extremely large-scale and we are now embracing the big-data era. Efficient statistical inference algorithms on such enormous dataset is unprecedentedly desirable.  Distributed computing provides an effective way to deal with  large-scale datasets. In addition to computation efficiency, distributed computing is also relatively robust to possible failures in the subsevers. There are lots of other reasons for establishing rigorous statistical theories on distributed computing, such as privacy protection, data ownerships and limitation of data storage. Two patterns of data segmentation were considered in the literature over the past few years, which are horizontal and vertical. ``Horizontal"  reserves all the features in each subserver, while data are scattered to different machines. Conversely, ``Vertical" means that features are divided into several parts, each storage has full access of data but part of the features, which is common in signal processing and sensor networks. Some representative works  include but not limited to \cite{qu2002principal},  \cite{zhang2012communication}, \cite{fan2019distributed} and  \cite{fan2021communication}. In particular, \cite{fan2019distributed} proposed  a distributed PCA algorithm: first each submachine computes
the top $K$ eigenvectors and transmits them to the central machine; then the central machine aggregates the information from all the submachines and conducts a PCA based on the aggregated information.

In the current work, we consider a robust alternative to the existing distributed PCA algorithm. We adopt the horizontal division and propose a distributed algorithm for estimating the principle eigenspaces without any moment constraint on the underlying distribution. In detail,  assume that we have $m$ subservers under the elliptical family framework, we first extract the leading $K$ orthonormal eigenvectors (as columns of $\hat\bV_K^{(i)}$) of the local sample multivariate Kendall's tau matrix with observations on the $i$-th subserver, for $i=1,\ldots,m$.
  We next transport these eigenspace estimators $\{span(\hat\bV_K^{(i)})\}$ to the central server, where $\{span(\hat\bV_K^{(i)})\}$ denotes the linear space spanned by the columns of $\hat\bV_K^{(i)}$. In essence, the transported local eigenspace estimators $\{span(\hat\bV_K^{(i)})\}$ can be viewed as points (representatives of equivalent class) in the Grassmann manifold, see Figure \ref{manifold} for better illustration. We then find the ``center" ( the red point $span(\widetilde\bV_K)$ in Figure \ref{manifold}) of these points by minimizing the projection metrics on Grassmann manifolds in the central server, which is analogous to the physical notion of barycenter. Mathematically, the ``center" $span(\widetilde\bV_K)$ satisfies
\[
\widetilde\bV_K=\bargmin_{\bV^\top\bV=\Ib}\sum_{i=1}^m\|\bV\bV^\top-\hat\bV_K^{(i)}\hat\bV_K^{(i)\top}\|_F^2,
\]
and it can be shown that $\widetilde\bV_K$ is exactly the leading $K$ eigenvectors of the average projection matrix
$\widetilde \bSigma=1/m\sum_{i=1}^m\hat\bV_K^{(i)}\hat\bV_K^{(i)\top}$.

\begin{figure}[!h]
     \centering \includegraphics[width=16cm]{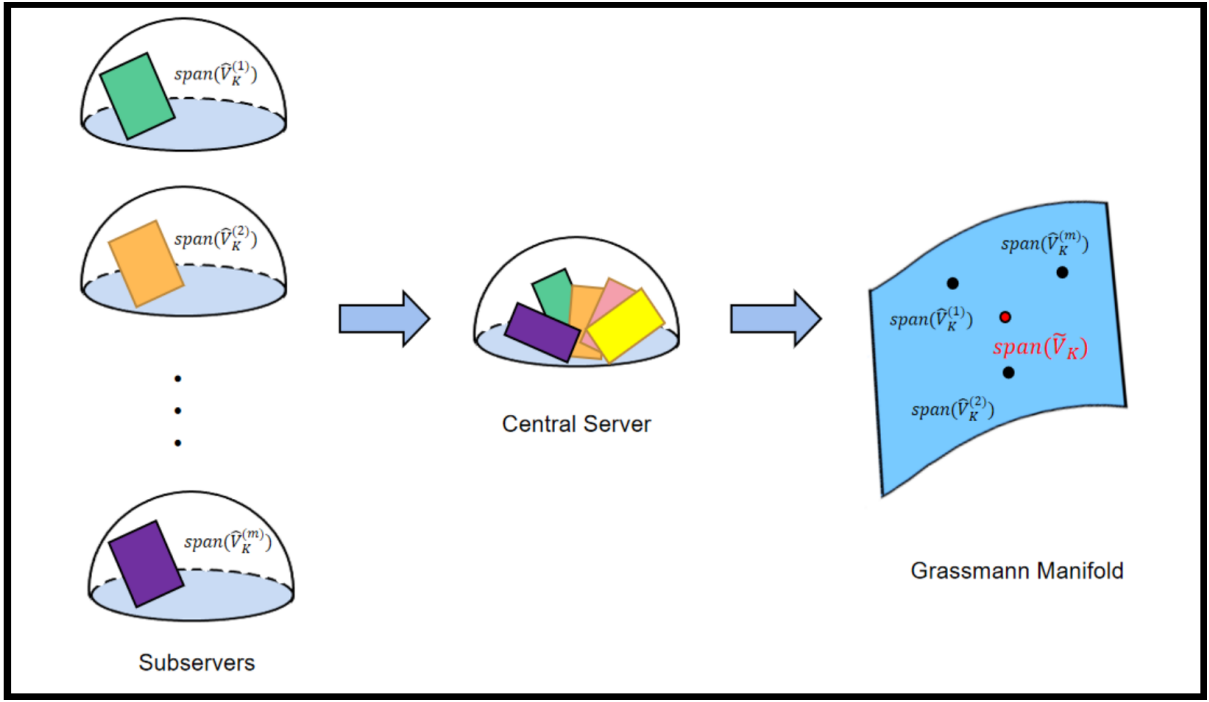}
    \caption{Workflow of the proposed distributed algorithm.}
    \label{manifold}
    \end{figure}

The  contributions of the current work include the following aspects: firstly we for the first time propose a robust alternative to distributed principal PCA. The proposed distributed algorithm is easy to implement and avoids huge transportation costs between central processor and subservers; secondly we investigate the bias and variance for the robust distributed estimator and derive its convergence rate which depends on the effective rank of the scatter matrix, eigengap, and the number of submachines. We show that the distributed estimator performs as if we
have full access of whole data; thirdly,  we extend the  distributed algorithm to the elliptical factor model in a horizontal partition
regime, which is the first distributed algorithm for robust factor analysis as far as we know.

The rest of this paper is organized as follows. Section \ref{section2} includes some notations and preliminary results on elliptical family and multivariate Kendall'tau matrix. In Section \ref{sec:new}, we present the details on our robust distributed algorithm for estimating principal eigenspace. In Section \ref{section3}, we investigate the theoretical  properties of the robust distributed estimator. In Section \ref{section5}, we extend the distributed algorithm to the  distributed elliptical factor analysis. Simulation results are given in Section \ref{section4}. We also applied the distributed algorithm to analyze a large-scale real macroeconomic dataset. Further discussions and future work direction are left in Section \ref{section6}.

\section{Notations and Preliminaries}\label{section2}
We first introduce some notations used throughout the article. Vectors and matrices will be written in bold symbol, scalars are written by regular letters. For two sequences $\{a_{n}\}_{n=1}^{\infty},\{b_{n}\}_{n=1}^{\infty}$, $a_{n}\lesssim b_{n}$ if there is a universal constant $C>0$, such that $a_{n}\leq Cb_{n}$, for which we also adopt another notation $a_{n}=O(b_{n})$, while $a_{n}=o(b_{n})$ means $a_{n}/b_{n}\rightarrow 0$, as $n\rightarrow \infty$. Index set $\{1,2,\ldots,p\}$ is simply denoted by $[p]$. For vectors, we define $\boldsymbol{e}_{i}$ to be the unit vector with 1 in the $i$-th component, $\Vert \cdot \Vert_{p}$ is the $\ell_{p}$-norm. For random variable $X$, Orlicz $\psi_1$ norm $\Vert X \Vert_{\psi_1}$ is defined as $\mathop{sup}_{p\geq1}(\mathbb{E}\lvert X \rvert^{p})^{1/p}/p$. If random vectors $\boldsymbol{X},\boldsymbol{Y}$ share the same distribution, we write $\boldsymbol{X}\overset{d}{=}\boldsymbol{Y}$. For matrix $\boldsymbol{A}\in\mathbb{R}^{p\times p}$, its transpose is $\boldsymbol{A}^\top$; $\lambda_{i}(\boldsymbol{A})$ is the $i$-th largest eigenvalue of $\boldsymbol{A}$; $\Vert \boldsymbol{A} \Vert_{2}$ is the spectral norm and $\Vert \boldsymbol{A} \Vert_{F}$ is the Frobenius norm. Given a fixed $K\in[p]$, the eigen gap of $\boldsymbol{A}$ is $\Delta(\boldsymbol{A})=\lambda_{K}(\boldsymbol{A})-\lambda_{K+1}(\boldsymbol{A})$, and $r=r(\boldsymbol{A}):=Tr(\boldsymbol{A})/\lambda_1(\boldsymbol{A})$ is the effective rank of $\boldsymbol{A}$. The space spanned by the columns of $\boldsymbol{A}$ is denoted by $span(\boldsymbol{A})$. For two matrices with orthogonal columns $\boldsymbol{A}\in\mathbb{R}^{p\times n_1},\boldsymbol{B}\in\mathbb{R}^{p\times n_2}$($n_1,n_2\leq p$), the spatial distance between $span(\boldsymbol{A})$ and $span(\boldsymbol{B})$ is $\rho(\boldsymbol{A},\boldsymbol{B})=\Vert \boldsymbol{A}\boldsymbol{A}^\top-\boldsymbol{B}\boldsymbol{B}^\top \Vert_{F}$.

\vspace{0.5em}
\noindent\textbf{Elliptical Distribution and Spatial Kendall's Tau Matrix}
\vspace{0.5em}

Elliptical family is a large distribution family containing common distributions such as Gaussian and $t$-distribution, which is widely used for modeling heavy-tailed data in finance and macroeconomics. To begin with, we recall its  definition and some of its nice properties. For further details on the elliptical distribution, see for example \cite{hult2002multivariate}.

In the following we give two equivalent definitions, one  by its characteristic function, the other by its stochastic representation. A random vector $\boldsymbol{X}=(X_{1},\ldots,X_{p})^\top$ belongs to the elliptical family,  denoted by $\boldsymbol{X}\sim ED_p(g;\boldsymbol{\mu},\boldsymbol{\Sigma})$, if its characteristic function has the form:
\[
\textbf{t}\mapsto \phi_{g}(\textbf{t};\boldsymbol{\mu},\boldsymbol{\Sigma})=e^{i\textbf{t}^\top\boldsymbol{\mu}}g(\textbf{t}^\top\boldsymbol{\Sigma} \textbf{t}),
\]
 where $g(\cdot)$ is a proper function defined on $[0,\infty)$, $\boldsymbol{\mu}\in\mathbb{R}^{p}$ is the location parameter, $\boldsymbol{\Sigma}$ is called scatter matrix with $rank(\boldsymbol{\Sigma})=q$. Equivalently,
\[
\boldsymbol{X}\overset{d}{=}\boldsymbol{\mu}+\xi\boldsymbol{AU},
\]
where $\boldsymbol{U}$ is a random vector uniformly distributed on the unit sphere $S^{q-1}$ in $\mathbb{R}^{q}$, $\xi \geq 0$ is a scalar random variable indepent of $\boldsymbol{U}$, $\boldsymbol{A}\in\mathbb{R}^{p\times q}$ is a deterministic matrix satisfying $\boldsymbol{AA}^\top=\boldsymbol{\Sigma}$. Thus we can also denote as $\boldsymbol{X}\sim ED_p(\boldsymbol{\mu},\boldsymbol{\Sigma},\xi)$ and $\xi$ and $g(\cdot)$ are mutually determined.
 Elliptically distributed random vectors have the same nice properties as Gaussian random vectors, such as  linear combination of elliptically distributed random vectors also follows elliptical distribution and their marginal distributions are also elliptical.

 For Gaussian distribution, scatter matrix is  the covariance matrix up to a  constant. For non-Gaussian distribution, even with infinite variance (such as Cauchy distribution), scatter matrix still measures the dispersion of a random vector. To estimate the principal eigenspace of elliptically distributed data, simply performing PCA to the sample covariance matrix is not satisfactory due to the possible inexistence of the population covariance matrix. To address the problem, we turn to a tool,  the spatial Kendall's tau matrix.
For $\boldsymbol{X} \sim ED(\boldsymbol{\mu},\boldsymbol{\Sigma},\xi)$ and its independent copy $\widetilde{\boldsymbol{X}}$, the population spatial Kendall's tau matrix of $\boldsymbol{X}$ is defined as:
\[
\boldsymbol{K}=\mathbb{E}\left\{\frac{(\boldsymbol{X}-\widetilde{\boldsymbol{X}})(\boldsymbol{X}-\widetilde{\boldsymbol{X}})^\top}{\Vert \boldsymbol{X}-\widetilde{\boldsymbol{X}}\Vert_{2}^{2}}\right\}.
\]
The spatial Kendall's tau matrix was first introduced  in \cite{choi1998multivariate} and has been used for a lot of statistical problems such as covariance matrix estimation and principal eigenspace estimation, see \cite{visuri2000sign}, \cite{fan2018large} and \cite{han2018eca}. A critical property of spatial Kendall's tau matrix  is that for elliptical vector $\bX$, it shares the same eigenvectors of scatter matrix with the same ordering of eigenvalues, see \cite{han2018eca} for detailed proof.
Suppose $\left\{ \boldsymbol{X}_{t}\right\}_{t=1}^{n}$ are $n$ independent data points from $\bX$, the sample spatial Kendall's tau matrix is naturally defined as a U-statistic:
\[
\widehat{\boldsymbol{K}}=\frac{2}{n(n-1)}\sum_{i<j}\frac{(\boldsymbol{X}_{i}-\boldsymbol{X}_{j})(\boldsymbol{X}_{i}-\boldsymbol{X}_{j})^\top}{\Vert \boldsymbol{X}_{i}-\boldsymbol{X}_{j}\Vert_{2}^{2}}.
\]

\section{Distributed PCA without Moment Constraint}\label{sec:new}
In this section, a new distributed algorithm for estimating principal eigenspace robustly is introduced in detail, which generalizes the distributed PCA by \cite{fan2019distributed} to the elliptical family setting.

Suppose there are $N$ samples of $\bX\sim ED_p(\boldsymbol{\mu},\boldsymbol{\Sigma},\xi)$ in total, and we are interested in recovering the eigenspace spanned by the leading $K$ eigenvectors of the scatter matrix $\bSigma$. The $N$ samples  spread across $m$ machines, and
on the $l$-th machine, there are $n=N/m$ samples. For $l\in[m]$, let $\{ \boldsymbol{X}_{t}^{(l)} \}_{t=1}^{n}$ denote the samples stored on the $l$-th machine.  First calculate the local sample spatial Kendall' tau matrix on the $l$-th server as
 $$\widehat{\boldsymbol{K}}^{(l)}=\frac{1}{C_{n}^{2}}\sum_{j<s}\frac{(\boldsymbol{X}_{j}^{(l)}-\boldsymbol{X}_{s}^{(l)})(\boldsymbol{X}_{j}^{(l)}-\boldsymbol{X}_{s}^{(l)})^\top}{\Vert \boldsymbol{X}_{j}^{(l)}-\boldsymbol{X}_{s}^{(l)} \Vert_{2}^{2}},  \ \ l=1,\ldots,m.$$
We further compute the  leading $K$ eigenvectors of sample Kendall's tau matrix $\widehat{\boldsymbol{K}}^{(l)}$ on the $l$-th server and denote $\widehat{\boldsymbol{V}}_{K}^{(i)}$ as the matrix with columns being these eigenvectors.
We next transport these $\widehat{\boldsymbol{V}}_{K}^{(i)}$ to the central server. The communication cost of the proposed distributed algorithm is of order $O(mKp)$. In contrast, to share all the data or entire spatial Kendall'tau matrix,
the communication cost will be of order $O(mp \min(n, p))$. In most cases
$K = o(min(n, p))$, the distributed algorithm requires much less communication cost
than naive data aggregation.

The transported local  estimator  $\widehat{\boldsymbol{V}}_{K}^{(i)}$ would span an eigenspace $\{span(\hat\bV_K^{(i)})\}$, which can be viewed as points  in the Grassmann manifold, as illustrated in Figure \ref{manifold}. We then find the ``center"  of these points by minimizing the projection metrics on Grassmann manifolds in the central server, i.e,
\[
\widetilde\bV_K=\bargmin_{\bV^\top\bV=\Ib}\sum_{i=1}^m\|\bV\bV^\top-\hat\bV_K^{(i)}\hat\bV_K^{(i)\top}\|_F^2.
\]
Denote $P_{\bV}=\bV\bV^\top$ and $P_{\hat\bV_K^{(i)}}=\hat\bV_K^{(i)}\hat\bV_K^{(i)\top}$, which are the projection matrices of $span(\bV)$ and $span(\hat\bV_K^{(i)})$ respectively.  Then we have:
\begin{equation*}
	\sum_{i}^m\|\bV\bV^\top-\hat\bV_K^{(i)}\hat\bV_K^{(i)\top}\|^2_F=\sum_{i=1}^m Tr(P_{\bV})+\sum_{i=1}^m Tr(P_{\hat\bV_K^{(i)}})-2Tr\left[P_{\bV}(\sum_{i=1}^m P_{\hat\bV_K^{(i)}})\right].
\end{equation*}

\begin{algorithm}
\caption{Robust Distributed Principle Component Analysis Algorithm}
\label{alg:A}
\begin{algorithmic}
\item
\textbf{STEP 1}:\par
On the $l$-th machine, compute the $K$ leading eigenvectors $\widehat{\boldsymbol{V}}_{K}^{(l)}=(\widehat{\boldsymbol{v}}_{1}^{(l)},\ldots,\widehat{\boldsymbol{v}}_{K}^{(l)})_{p\times K}$ of the sample Kendall's tau matrix $\widehat{\boldsymbol{K}}^{(l)}$;\\
\\
\textbf{STEP 2}:\par
On the central server, compute the $K$ leading eigenvectors $\widetilde{\boldsymbol{V}}_{K}=(\widetilde{\boldsymbol{v}}_{1},\ldots,\widetilde{\boldsymbol{v}}_{K})$ of $\widetilde{\boldsymbol{\Sigma}}=\frac{1}{m}\sum_{i=1}^{m}\widehat{\boldsymbol{V}}_{K}^{(i)}\widehat{\boldsymbol{V}}_{K}^{(i)^\top}$ \\
\\
\textbf{Output}:\par
The estimator of the leading eigenspace, $span(\widetilde{\boldsymbol{V}}_{K})$

\end{algorithmic}
\end{algorithm}

The first two terms on the right hand side are fixed, so it's equivalent to maximizing the last term, namely $\mathcal{L}(\bV) = \tr\left[\bV^{\top}(\sum_{i=1}^m P_{\hat\bV_K^{(i)}})\bV\right]$. Thus it's clear that $\widetilde\bV_K$ is the leading $K$ eigenvectors of the average projection matrix $$\widetilde\bSigma= \frac{1}{m}\sum_{i=1}^m P_{\hat\bV_K^{(i)}}=\frac{1}{m}\sum_{i=1}^{m}\widehat{\boldsymbol{V}}_{K}^{(i)}\widehat{\boldsymbol{V}}_{K}^{(i)^\top}.$$
At last we take $\widetilde\bV_K$ as the final distributed estimator of the principal eigenspace.
The detailed algorithm
 is summarized in Algorithm \ref{alg:A}.

 \cite{fan2019distributed} proposed a distributed PCA method in which they explain the aggregation procedure from the perspective of semidefinite programming (SDP) problem and minimize the sum of squared losses.  In the current work, we provide a rationale of the aggregation procedure from the perspective of the Grassmann manifold, which is quite easy to understand. The viewpoint from the Grassmann manifold is similar to \cite{li2022manifold}, where they proposed manifold PCA for matrix elliptical factor model but not from the perspective of distributed computation.

\section{Theoretical properties}\label{section3}
In this section, we investigate the  theoretical property of the distributed estimator $\widetilde \bV_K$. We focus on the distance of the spaces spanned by the columns of $\widetilde{\boldsymbol{V}}_{K}$ and those of $\boldsymbol{V}_{K}$.  We adopt the metric in \cite{fan2019distributed} to analyse the statistical error of $span\{\widetilde{\boldsymbol{V}}_{K}\}$ as an estimator of $span\{\boldsymbol{V}_{K}\}$. At the beginning, we give some assumptions for technical analysis.
\begin{assumption}\label{assump1}
Assume that $\bX\sim ED_p(\bmu,\bSigma,\xi)$, for a fixed $\delta >0$, $0<\alpha <1/2$, the scatter matrix $\boldsymbol{\Sigma}$ satisfies:
\begin{itemize}
\item [\romannumeral1)] There exists a universal positive constant $C>0$ such that the ratio $\gamma$ of the maximal eigenvalue to minimal eigenvalue of the scatter matrix satisfies:
\[
\gamma =\frac{\lambda_{1}(\boldsymbol{\Sigma})}{\lambda_{p}(\boldsymbol{\Sigma})} \leq Cp^{\alpha};
\]

\item [\romannumeral2)] There is a relatively significant difference between the $K$-th eigenvalue and the ($K+1$)-th eigenvalue:
\[
\frac{\lambda_{K}(\boldsymbol{\Sigma})}{\lambda_{K+1}(\boldsymbol{\Sigma})} \geq 1+\delta
\]
\end{itemize}

Assumption \ref{assump1} imposes conditions on the eigengap and condition number of the scatter matrix, which is common in the high-dimensional PCA literature and is mild in the following sense. In terms of  elliptical factor models in \cite{he2022large}, that's $\bX_t=\bL\bbf_t+\bu_t$,
 the scatter matrix of $\bX_t$ has the low rank plus sparsity structure $\boldsymbol{\Sigma}=\boldsymbol{L}\boldsymbol{L}^\top+\boldsymbol{\Sigma}_{u}$, where $\boldsymbol{\Sigma}_{u}$  is the scatter matrix of idiosyncratic errors $\bu_t$, see also Section \ref{section5}.  Assume that there exists $K$ latent factors, the loading matrix $\boldsymbol{L}$ satisfies the pervasive condition $\boldsymbol{L}^\top\boldsymbol{L}/p^{\alpha}\rightarrow \boldsymbol{Q}$ as $p\rightarrow \infty$, with $\boldsymbol{Q}$ being a $K\times K$ positive definite matrix, and  the eigenvalues of $\bSigma_{u}$ are bounded from below and above, then the scatter matrix $\bSigma$  would satisfy the conditions in Assumption \ref{assump1}.
\end{assumption}

In the following analysis,
we decompose the error $\rho(\widetilde{\boldsymbol{V}}_{K},\boldsymbol{V}_{K})$ into two parts: the bias term and the variance term. In detail,
$$
\rho(\widetilde{\boldsymbol{V}}_{K},\boldsymbol{V}_{K}) \leq \underbrace{\rho(\widetilde{\boldsymbol{V}}_{K},\boldsymbol{V}_{K}^{*})}_{\text{variance term}}+\underbrace{\rho(\boldsymbol{V}_{K}^{*}, \boldsymbol{V}_{K})}_{\text{bias term}},
$$
where $\boldsymbol{V}_{K}^{*}$ is the $K$ leading eigenvectors of $\boldsymbol{\Sigma}^{*}=\mathbb{E}(\widehat{\boldsymbol{V}}_{K}^{(i)}\widehat{\boldsymbol{V}}_{K}^{(i)^\top})$. In the following we present the upper bound of the variance term  and the  bias term  in Theorem \ref{theo:1} and Theorem \ref{theo:2} respectively.

\begin{theorem}\label{theo:1}
Under Assumption \ref{assump1}, further assume that $\log p=o(N)$. Then we have
$$
\Vert {\rho}(\widetilde{\boldsymbol{V}}_{K},\boldsymbol{V}_{K}^{*})\Vert_{\psi_1}\lesssim \kappa rK^{3/2}\sqrt{\frac{\log p}{N}},
$$
where \[
\kappa =\frac{\lambda_{1}(\boldsymbol{\Sigma})}{\lambda_{K}(\boldsymbol{\Sigma})-\lambda_{K+1}(\boldsymbol{\Sigma})}, \ \ r=\frac{Tr(\boldsymbol{\Sigma})}{\lambda_{1}(\boldsymbol{\Sigma})}.
\]
\end{theorem}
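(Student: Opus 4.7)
The plan is a two-stage Davis--Kahan perturbation argument. Writing $P_i = \widehat{\boldsymbol{V}}_K^{(i)}\widehat{\boldsymbol{V}}_K^{(i)\top}$ so that $\widetilde{\boldsymbol{\Sigma}} = m^{-1}\sum_i P_i$ and $\boldsymbol{\Sigma}^* = \mathbb{E} P_1$, the matrices $\widetilde{\boldsymbol{V}}_K$ and $\boldsymbol{V}_K^*$ are the leading-$K$ eigenvectors of $\widetilde{\boldsymbol{\Sigma}}$ and $\boldsymbol{\Sigma}^*$ respectively. I would first invoke the operator-norm $\sin\Theta$ theorem together with the identity $\rho(\widetilde{\boldsymbol{V}}_K,\boldsymbol{V}_K^*) = \sqrt{2}\,\|\sin\Theta\|_F \leq \sqrt{2K}\,\|\sin\Theta\|_2$ to obtain
$$
\rho(\widetilde{\boldsymbol{V}}_K,\boldsymbol{V}_K^*) \;\lesssim\; \frac{\sqrt{K}\,\|\widetilde{\boldsymbol{\Sigma}}-\boldsymbol{\Sigma}^*\|_2}{\Delta(\boldsymbol{\Sigma}^*)}.
$$
Applying the operator-norm version (rather than the Frobenius version) at this outer step is crucial: it saves a factor of $\sqrt{K}$ relative to a naive Frobenius Davis--Kahan and is ultimately responsible for the $K^{3/2}$, rather than $K^2$, dependence in the final rate.

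To lower-bound $\Delta(\boldsymbol{\Sigma}^*)$ by an absolute constant, I would use the fact that $P_i$ concentrates around the true rank-$K$ projection $\boldsymbol{V}_K\boldsymbol{V}_K^\top$. A single-sample Davis--Kahan applied to $\widehat{\boldsymbol{K}}^{(i)}$ against $\boldsymbol{K}$, combined with the elliptical-framework concentration bound $\|\widehat{\boldsymbol{K}}^{(i)} - \boldsymbol{K}\|_2 \lesssim r\sqrt{\log p/n}$ available from \cite{han2018eca}, yields $\mathbb{E}\|P_i - \boldsymbol{V}_K\boldsymbol{V}_K^\top\|_2 = o(1)$ under Assumption~\ref{assump1} and $\log p = o(N)$. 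Jensen then gives $\|\boldsymbol{\Sigma}^* - \boldsymbol{V}_K\boldsymbol{V}_K^\top\|_2 = o(1)$, and Weyl's inequality forces $\lambda_K(\boldsymbol{\Sigma}^*) \to 1$ and $\lambda_{K+1}(\boldsymbol{\Sigma}^*)\to 0$; in particular $\Delta(\boldsymbol{\Sigma}^*) \gtrsim 1$.

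It remains to bound the Orlicz norm $\bigl\|\,\|\widetilde{\boldsymbol{\Sigma}} - \boldsymbol{\Sigma}^*\|_2\,\bigr\|_{\psi_1}$. Decompose $\widetilde{\boldsymbol{\Sigma}} - \boldsymbol{\Sigma}^* = m^{-1}\sum_{i=1}^m(P_i - \mathbb{E} P_i)$ into i.i.d.\ zero-mean matrices with $\|P_i\|_2 \leq 1$. The matrix variance is controlled via the elementary bound
$$
\bigl\|\mathbb{E}(P_1 - \mathbb{E} P_1)^2\bigr\|_2 \;\le\; \mathbb{E}\|P_1 - \boldsymbol{V}_K\boldsymbol{V}_K^\top\|_F^2 \;=\; \mathbb{E}\,\rho^2(\widehat{\boldsymbol{V}}_K^{(1)},\boldsymbol{V}_K),
$$
and a Frobenius Davis--Kahan at the single-machine level, fed by the Kendall's tau concentration recalled above, gives $\mathbb{E}\rho^2(\widehat{\boldsymbol{V}}_K^{(1)},\boldsymbol{V}_K) \lesssim (\kappa r K)^2\,\log p / n$. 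A moment-form matrix Bernstein inequality then converts this variance proxy into uniform $L^p$-moment control on $\|\widetilde{\boldsymbol{\Sigma}}-\boldsymbol{\Sigma}^*\|_2$ that integrates to $\bigl\|\,\|\widetilde{\boldsymbol{\Sigma}}-\boldsymbol{\Sigma}^*\|_2\,\bigr\|_{\psi_1} \lesssim \kappa r K\sqrt{\log p / N}$. Inserting this into the first display yields the claimed $\kappa r K^{3/2}\sqrt{\log p / N}$ rate.

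The main technical obstacle is the passage from high-probability concentration to the full $\psi_1$-Orlicz norm: a plain matrix Bernstein tail bound is not enough, and one has to track $L^p$ moments uniformly in $p$ while ensuring that the effective rank $r$ enters the variance proxy only linearly (it is easy to lose an extra factor of $r$ here by routing carelessly through the Frobenius norm of $\widehat{\boldsymbol{K}}^{(i)} - \boldsymbol{K}$). A secondary bookkeeping challenge is choosing the right perturbation metric at each of the two Davis--Kahan applications—Frobenius norm for the inner single-machine variance computation, and operator norm for the outer aggregation step—so as to preserve the sharp $K^{3/2}$ exponent.
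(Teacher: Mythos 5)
Your overall architecture matches the paper's: an outer Davis--Kahan comparison of $\widetilde{\boldsymbol{\Sigma}}$ with $\boldsymbol{\Sigma}^{*}$, a constant lower bound on $\Delta(\boldsymbol{\Sigma}^{*})$ via concentration of $P_i=\widehat{\boldsymbol{V}}_K^{(i)}\widehat{\boldsymbol{V}}_K^{(i)\top}$ about $\boldsymbol{V}_K\boldsymbol{V}_K^{\top}$ plus Weyl, and an inner Davis--Kahan fed by the concentration of $\Vert\widehat{\boldsymbol{K}}^{(i)}-\boldsymbol{K}\Vert_2$ together with the eigengap bound $1/(\lambda_K(\boldsymbol{K})-\lambda_{K+1}(\boldsymbol{K}))\lesssim\kappa r$. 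The genuine gap is in your aggregation step. Matrix Bernstein applied to $m^{-1}\sum_i(P_i-\mathbb{E}P_i)$ in operator norm, whether in tail or moment form, carries a dimensional factor $\log p$ (or an intrinsic dimension that you have not shown to be polynomial in $K$, and which is not obviously so, since $\mathbb{E}(P_1-\mathbb{E}P_1)^2$ need not be low rank). With your variance proxy $v\lesssim K\kappa^2r^2\log p/n$ and range $R\lesssim 1$ it yields
\[
\bigl\Vert\,\Vert\widetilde{\boldsymbol{\Sigma}}-\boldsymbol{\Sigma}^{*}\Vert_2\,\bigr\Vert_{\psi_1}\;\lesssim\;\kappa r\sqrt{K}\,\frac{\log p}{\sqrt{N}}+\frac{\log p}{m},
\]
not the $\kappa rK\sqrt{\log p/N}$ you assert: the leading term is off by a factor $\sqrt{\log p}$ (up to powers of $K$), and the additive $\log p/m$ term can dominate whenever $m\ll n\log p$, which is the typical regime. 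As written, the argument therefore establishes a rate that is strictly weaker than the theorem unless $\log p\lesssim K^{2}$.

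The paper sidesteps this by never leaving the Frobenius norm: it uses the Frobenius Davis--Kahan at the outer step, which bounds $\rho(\widetilde{\boldsymbol{V}}_K,\boldsymbol{V}_K^{*})$ directly by $\Vert\widetilde{\boldsymbol{\Sigma}}-\boldsymbol{\Sigma}^{*}\Vert_F/\Delta(\boldsymbol{\Sigma}^{*})$ with no $\sqrt{K}$ conversion, and then treats the centered, uniformly bounded matrices $P_i-\boldsymbol{\Sigma}^{*}$ as elements of the Hilbert space $(\mathbb{R}^{p\times p},\Vert\cdot\Vert_F)$, invoking a dimension-free Hilbert-space Bernstein inequality (Lemma 4 of Fan et al.). This gives $m^{-1/2}\max_i\bigl\Vert\,\Vert P_i-\boldsymbol{\Sigma}^{*}\Vert_F\,\bigr\Vert_{\psi_1}\lesssim\kappa r\sqrt{K}\sqrt{\log p/N}$ with no dimensional penalty, and the deterministic bound $\Vert P_i-\boldsymbol{\Sigma}^{*}\Vert_F\le 2\sqrt{K}$ is exactly what makes the $\psi_1$ control free of moment assumptions. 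Your claim that the operator-norm outer step is ``crucial'' for the $K^{3/2}$ exponent is therefore backwards: the Frobenius route produces $K^{1/2}$ in the paper's intermediate bound, which is then relaxed to $K^{3/2}$ in the statement. To repair your proof, replace the operator-norm matrix Bernstein step with the Frobenius/Hilbert-space concentration (or supply a genuinely dimension-free operator-norm bound). A smaller caveat, shared with the paper: certifying $\Delta(\boldsymbol{\Sigma}^{*})\gtrsim 1$ requires $\kappa r\sqrt{K\log p/n}$ to be small, i.e.\ a condition on $\log p$ relative to $n=N/m$, not merely $\log p=o(N)$.
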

The measure of concentration  is  essential in high-dimensional
statistical analysis. \cite{fan2019distributed} exerted the sub-Gaussian assumption to derive a form of concentration inequality. However,
the proposed algorithm maps each eigenspace estimator to the Grassmann manifolds where concentration inequalities are easily derived due to the compactness therein, thus the results also hold even for Cauchy distribution without any moment.
Next we give the upper bound of the bias term.

\begin{theorem}\label{theo:2}
Under Assumption \ref{assump1}, further assume that $\log p=o(N/m)$. Then we have
$$
{\rho}(\boldsymbol{V}_{K}^{*},\boldsymbol{V}_{K})\lesssim \kappa ^{2}r^{2}K^{5/2}\frac{m\log p}{N}.
$$
\end{theorem}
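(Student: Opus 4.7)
The strategy is a projector-level bias analysis in two moves: reduce $\rho(\boldsymbol{V}_K^*,\boldsymbol{V}_K)$ to the Frobenius norm of $\boldsymbol{\Sigma}^* - \boldsymbol{V}_K\boldsymbol{V}_K^\top$ by a $\sin\Theta$ inequality, and then expand $\widehat{\boldsymbol{V}}_K^{(i)}\widehat{\boldsymbol{V}}_K^{(i)\top}$ around $\boldsymbol{V}_K\boldsymbol{V}_K^\top$ via a Riesz contour integral whose first-order term is mean-zero. The cancellation of that first-order term is precisely what converts the variance-scale $\sqrt{\log p/n}$ (Theorem \ref{theo:1}) into the bias-scale $\log p / n = m\log p/N$.

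For the first move, apply the $\sin\Theta$ theorem with $\boldsymbol{A}=\boldsymbol{\Sigma}^*$ and $\boldsymbol{B}=\boldsymbol{V}_K\boldsymbol{V}_K^\top$. The latter has spectrum $\{1,0\}$ and $K$-th eigengap equal to $1$; under the sample-size assumption $\log p = o(N/m)$, $\boldsymbol{\Sigma}^*$ is close enough to $\boldsymbol{V}_K\boldsymbol{V}_K^\top$ that its $K$-th eigengap is itself of order $1$. Thus $\rho(\boldsymbol{V}_K^*,\boldsymbol{V}_K) \lesssim \sqrt{K}\,\|\boldsymbol{\Sigma}^* - \boldsymbol{V}_K\boldsymbol{V}_K^\top\|$.

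For the second move, write $\boldsymbol{\Delta}_i := \widehat{\boldsymbol{K}}^{(i)}-\boldsymbol{K}$ and represent each top-$K$ spectral projector via the Riesz formula $P_{\le K}(\boldsymbol{M}) = -(2\pi i)^{-1}\oint_\gamma (z\boldsymbol{I}-\boldsymbol{M})^{-1}\,dz$, with $\gamma$ a contour enclosing only the top $K$ eigenvalues of $\boldsymbol{K}$ at spectral distance of order $\Delta := \lambda_K(\boldsymbol{K})-\lambda_{K+1}(\boldsymbol{K})$. Expanding the resolvent of $\widehat{\boldsymbol{K}}^{(i)}$ as a Neumann series around that of $\boldsymbol{K}$ yields
\[
\widehat{\boldsymbol{V}}_K^{(i)}\widehat{\boldsymbol{V}}_K^{(i)\top} - \boldsymbol{V}_K\boldsymbol{V}_K^\top \;=\; \sum_{k\ge 1}\mathcal{S}_k(\boldsymbol{\Delta}_i),
\]
where $\mathcal{S}_k$ is $k$-linear in $\boldsymbol{\Delta}_i$ built from resolvent sandwiches. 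Since the sample Kendall's tau is unbiased, $\mathbb{E}\boldsymbol{\Delta}_i=0$, hence $\mathbb{E}\mathcal{S}_1(\boldsymbol{\Delta}_i)=0$, and $\boldsymbol{\Sigma}^* - \boldsymbol{V}_K\boldsymbol{V}_K^\top = \sum_{k\ge 2}\mathbb{E}\mathcal{S}_k(\boldsymbol{\Delta}_i)$ is genuinely second-order in the perturbation.

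To close, reuse the Han–Liu-type concentration for the sample multivariate Kendall's tau (the very ingredient that drives Theorem \ref{theo:1}) to obtain $\|\boldsymbol{\Delta}_i\| \lesssim \lambda_1(\boldsymbol{K})\,r\sqrt{\log p/n}$ w.h.p.\ and $\mathbb{E}\|\boldsymbol{\Delta}_i\|^2 \lesssim \lambda_1(\boldsymbol{K})^2 r^2 \log p / n$. With $\|(z\boldsymbol{I}-\boldsymbol{K})^{-1}\|\lesssim 1/\Delta$ on $\gamma$, contour length of order $\Delta$, and the shared-eigenstructure of $\boldsymbol{K}$ and $\boldsymbol{\Sigma}$ from \cite{han2018eca} giving $\Delta \gtrsim \lambda_1(\boldsymbol{K})/\kappa$, one gets $\|\mathbb{E}\mathcal{S}_2(\boldsymbol{\Delta}_i)\| \lesssim \kappa^2 r^2 \log p/n$; the $k\ge 3$ terms form a geometric tail dominated by the $k=2$ term under $\log p = o(N/m)$. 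Tracking Frobenius-vs-spectral conversions through Step 1 and through a block decomposition of $\boldsymbol{\Delta}_i$ in the $\mathcal{S}_2$ term produces the remaining $K^{5/2}$ factor, giving $\kappa^2 r^2 K^{5/2}\, m\log p/N$ after substituting $n=N/m$. The main obstacle is the sharp Frobenius-norm control of $\mathbb{E}\mathcal{S}_2(\boldsymbol{\Delta}_i)$: a crude $\|\cdot\|_F \le \sqrt{p}\,\|\cdot\|$ bound would replace the effective rank $r$ by $\sqrt{p}$, so one must split $\boldsymbol{\Delta}_i$ into its $\boldsymbol{V}_K$-aligned and $\boldsymbol{V}_K$-transverse blocks and reduce the relevant mixed second moments to traces, which then collapse into the factor $r$. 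A secondary nuisance is handling the Neumann remainder off the high-probability event on which $\|\boldsymbol{\Delta}_i\|\ll \Delta$.
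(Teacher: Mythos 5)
Your plan is essentially the paper's proof: the paper reduces $\rho(\boldsymbol{V}_K^*,\boldsymbol{V}_K)$ to $\Vert \mathbb{E}(\widehat{\boldsymbol{V}}_K^{(1)}\widehat{\boldsymbol{V}}_K^{(1)\top})-\boldsymbol{V}_K\boldsymbol{V}_K^\top\Vert_F\lesssim \sqrt{K}\,\mathbb{E}\bigl(\Vert\widehat{\boldsymbol{K}}-\boldsymbol{K}\Vert_2/\Delta\bigr)^2$ by citing Theorem 3 of \cite{fan2019distributed} --- whose proof is exactly your contour-integral expansion of the spectral projector with the mean-zero first-order term cancelling --- and then concludes with the same two ingredients you invoke, namely the $\psi_1$ concentration of $\Vert\widehat{\boldsymbol{K}}-\boldsymbol{K}\Vert_2$ and the eigengap bound $\Delta^{-1}\lesssim \kappa r$ for the population Kendall's tau matrix. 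The only substantive difference is that you re-derive the second-order bias bound rather than citing it as a black box.
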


By combining the results of the bias term and the variance term, we have the following corollary.
\begin{corollary}\label{coro:1}
Under Assumption \ref{assump1}, further assume that $\log p=o(N/m)$. Then there exist positive constants $C_{1},C_{2}$ such that
$$\Vert {\rho}(\widetilde{\boldsymbol{V}}_{K},\boldsymbol{V}_{K})\Vert_{\psi_1}\leq C_{1}\kappa rK^{3/2}(\frac{\log\,p}{N})^{1/2}+C_{2}\kappa ^{2}r^{2}K^{5/2}\frac{m\log p}{N};$$
if the number of subservers satisfies $m\leq C(N/\log p)^{1/2}/(\kappa rK)$ for some constant $C$, we have
$$\Vert {\rho}(\widetilde{\boldsymbol{V}}_{K},\boldsymbol{V}_{K})\Vert_{\psi_1}\lesssim \kappa rK^{3/2}\left(\frac{\log p}{N}\right)^{1/2}.$$
\end{corollary}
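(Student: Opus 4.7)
The plan is to invoke the bias--variance decomposition $\rho(\widetilde{\bV}_K, \bV_K) \le \rho(\widetilde{\bV}_K, \bV_K^*) + \rho(\bV_K^*, \bV_K)$ introduced in the preamble, take the Orlicz $\psi_1$ norm on both sides, and apply Theorems \ref{theo:1} and \ref{theo:2} to the two summands. For this to work cleanly, I would first record that $\|\cdot\|_{\psi_1}$ as defined in the paper, namely $\sup_{p\geq 1}(\mathbb{E}|\cdot|^p)^{1/p}/p$, is a seminorm: Minkowski's inequality gives subadditivity of $(\mathbb{E}|\cdot|^p)^{1/p}$ for each $p$, dividing by $p$ preserves it, and passing to the supremum preserves it once more.

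Next I would treat the two summands separately. The variance piece is exactly the object bounded in Theorem \ref{theo:1}, contributing $C_1 \kappa r K^{3/2}\sqrt{\log p / N}$. For the bias, I would point out that both $\bV_K^*$ (the leading eigenvectors of the deterministic matrix $\bSigma^* = \mathbb{E}[\widehat{\bV}_K^{(i)}\widehat{\bV}_K^{(i)\top}]$) and $\bV_K$ are non-random, so $\rho(\bV_K^*, \bV_K)$ is a deterministic scalar; the $\psi_1$ norm of a constant $c$ is $\sup_{p\geq 1}|c|/p = |c|$, hence the $\psi_1$ norm of the bias coincides with its value, and Theorem \ref{theo:2} supplies the bound $C_2 \kappa^2 r^2 K^{5/2} m\log p / N$. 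Since the corollary's hypothesis $\log p = o(N/m)$ implies $\log p = o(N)$, the assumptions of both theorems are satisfied, and summing the two contributions yields the first inequality.

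For the sharper second bound, I would locate the regime in which the bias is asymptotically negligible compared to the variance. A direct comparison of the two rates shows that the bias is dominated by the variance precisely when $\kappa r K \cdot m \sqrt{\log p / N} \lesssim 1$, i.e.\ $m \lesssim (N/\log p)^{1/2}/(\kappa r K)$, which is exactly the imposed constraint; under it the sum collapses to the single variance rate $\kappa r K^{3/2}(\log p / N)^{1/2}$. The whole argument is essentially bookkeeping on top of the two theorems, and there is no real obstacle; the only items deserving a sentence of verification are the subadditivity of the specific $\psi_1$ norm used here and the trivial fact that it agrees with the absolute value on constants.
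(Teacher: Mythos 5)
Your proposal is correct and matches the paper's (implicit) argument exactly: the paper offers no separate proof of the corollary, deriving it directly from the triangle inequality for $\rho$, Theorem \ref{theo:1} for the variance term, Theorem \ref{theo:2} for the (deterministic) bias term, and the observation that the stated bound on $m$ makes the bias rate dominated by the variance rate. Your extra care in verifying subadditivity of the $\psi_1$ seminorm and its agreement with the absolute value on constants is a welcome but routine addition.
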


By Corollary \ref{coro:1} and the theoretical result in \cite{han2018eca}, we conclude that our algorithm is not only  communication-efficient  but also outputs an estimator which shares the same convergence rate as the ECA estimator with all samples available at one machine  when $m$ is not larger than the bounds given in Corollary \ref{coro:1}. This is also confirmed  in the following simulation study.

\section{Extension to Elliptical factor models}\label{section5}
In this section, we assume that the data points $\bX_t$ has a elliptical factor model, i.e.,
$\left\{\boldsymbol{X}_{t}\right\}_{t=1}^{N}\subset\mathbb{R}^{p}$,
\[
\boldsymbol{X}_{t}=\boldsymbol{L}\boldsymbol{f}_{t}+\boldsymbol{u}_{t}, \ \ t=1,\ldots, N,
\]
where $\boldsymbol{X}_{t}=(X_{1t},\ldots,X_{pt})^\top, \boldsymbol{f}_{t}\in\mathbb{R}^{K}$ are the latent factors, $\boldsymbol{L}=(\boldsymbol{l}_{1},\ldots,\boldsymbol{l}_{p})^\top$ is the factor loading matrix and $\boldsymbol{u}_{t}$'s are the idiosyncratic errors. Note that for the elliptical factor model, the loading space $span(\bL)$ rather than $\bL$ is identifiable. We assume that random vectors $(\boldsymbol{f}_{t}^\top,\boldsymbol{u}_{t}^\top)^\top$ are generated independently and identically from the elliptical distribution $ED_{K+p}(\zero,\bSigma_0,\xi)$, where $\bSigma_0=\text{diag}(\Ib_{K},\bSigma_u)$, see \cite{he2022large} for more details on elliptic distribution. By the property of the elliptical family, the scatter matrix of $\boldsymbol{X}_{t}$ has the form: $\boldsymbol{\Sigma}=\boldsymbol{L}\boldsymbol{L}^\top+\boldsymbol{\Sigma}_{u}$. We assume that the loading matrix $\boldsymbol{L}$ satisfies the pervasive condition $\boldsymbol{L}^\top\boldsymbol{L}/p^{\alpha}\rightarrow \boldsymbol{Q}$ as $p\rightarrow \infty$, where $\boldsymbol{Q}$ is a $K\times K$ positive definite matrix, and  the eigenvalues of $\bSigma_{u}$ are bounded from below and above. Then $\bSigma$  satisfies Assumption \ref{assump1}. Naturally, we use the space spanned by the $K$ leading eigenvectors of the sample Multivariate Kendall's tau matrix as the estimator of the loading space $span(\bL)$. In the article, we assume that the $N$ observations spread across $m$ machines, and
on the $l$-th machine, there are $n=N/m$ observations. For $l\in[m]$, let $\{ \boldsymbol{X}_{t}^{(l)} \}_{t=1}^{n}$ denote the observations stored on the $l$-th machine.
\[
\boldsymbol{X}_{t}^{(l)}=\boldsymbol{L}\boldsymbol{f}_{t}^{(l)}+\boldsymbol{u}_{t}^{(l)}, \ \ t=1,\ldots, n, \ \ l=1\ldots, m.
\]
\begin{algorithm}\label{method2}
\caption{Distributed algorithm for elliptical factor model}
\label{alg:2}
\begin{algorithmic}
\item
\textbf{STEP1}:\par
On the $l$-th machine, $l=1,\ldots,m$, compute the $K$ leading eigenvectors $\widehat{\boldsymbol{V}}_{K}^{(l)}=(\widehat{\boldsymbol{v}}_{1}^{(l)},\ldots,\widehat{\boldsymbol{v}}_{K}^{(l)})_{p\times K}$ of the sample Kendall's tau matrix, and transport $\widehat{\boldsymbol{V}}_K^{(l)},l=1\ldots,m$ to the central server.\\
\\
\textbf{STEP2}:\par
On the central processor, compute the $K$ leading eigenvectors $\widetilde{\boldsymbol{V}}_{K}=(\widetilde{\boldsymbol{v}}_{1},\ldots,\widetilde{\boldsymbol{v}}_{K})$ of $\widetilde{\boldsymbol{\Sigma}}=\frac{1}{m}\sum_{l=1}^{m}\widehat{\boldsymbol{V}}_{K}^{(l)}\widehat{\boldsymbol{V}}_{K}^{(l)^\top}$ and use $span(\widetilde{\boldsymbol{V}}_{K})$ as the estimator of factor loading space $span(\boldsymbol{L})$.\\
\\
\textbf{STEP3}:\par
Transport $\widetilde{\boldsymbol{V}}_{K}$ back to each servers, and set $\tilde \bL=p^{\alpha/2}\widetilde{\boldsymbol{V}}_{K}$, further solve the least squares problem on each server to get estimators of the factor scores, $\widetilde{\boldsymbol{f}}_{t}^{(l)}$.
\[
\begin{aligned}
\widetilde{\boldsymbol{f}}_{t}^{(l)}=\mathop{\arg\min}\limits_{\boldsymbol{\beta}\in \mathbb{R}^{K}}\sum_{i=1}^{p}(X_{it}-\widetilde{\boldsymbol{l}}_{i}^\top\boldsymbol{\beta}_{j})^{2}=\mathop{\arg\min}\limits_{\boldsymbol{\beta}\in\mathbb{R}^{K}}\Vert \boldsymbol{X}_{t}-\widetilde{\boldsymbol{L}}\boldsymbol{\beta}_{j}\Vert_{2}^{2},
\end{aligned}
\]\\
where  $\widetilde \bL=(\widetilde{\boldsymbol{l}}_{1},\ldots,\widetilde{\boldsymbol{l}}_{p})^\top$.
\\
\textbf{Output}:\par
Factor loading space estimator $span(\widetilde{\boldsymbol{V}}_{K})$, factor scores estimators $\widetilde{\boldsymbol{f}}_{t}^{(l)}, l=1,\ldots,m, t=1,\ldots,n$.
\end{algorithmic}
\end{algorithm}
By the proposed distributed algorithm, we further propose a distributed  procedure to estimate the factor loading space $span(L)$ and the factor scores. The algorithm goes as follows: firstly for $l=1,\ldots,m$, compute the $K$ leading eigenvectors $\{\widehat{\boldsymbol{v}}_{1}^{(l)},\ldots,\widehat{\boldsymbol{v}}_{K}^{(l)}\}$ of the sample Kendall's tau matrix on the $l$-th server, and let $\widehat{\boldsymbol{V}}_{K}^{(l)}=(\widehat{\boldsymbol{v}}_{1}^{(l)},\ldots,\widehat{\boldsymbol{v}}_{K}^{(l)})_{p\times K}$;  then transport $\widehat{\boldsymbol{V}}_{K}^{(l)}$ to the central processor; then perform spectral decomposition for the average of projection operators $\widetilde\bSigma={1}/{m}\sum_{l=1}^{m}\widehat{\boldsymbol{V}}_{K}^{(l)}\widehat{\boldsymbol{V}}_{K}^{(l)^\top}$ to get its $K$ leading eigenvectors $\{\widetilde{\boldsymbol{v}}_{1},\ldots,\widetilde{\boldsymbol{v}}_{K}\}$, and let $\widetilde{\boldsymbol{V}}_{K}=(\widetilde{\boldsymbol{v}}_{1},\ldots,\widetilde{\boldsymbol{v}}_{K})$. At last $span(\widetilde{\boldsymbol{V}}_{K})$ is set as the final estimator of $span(\bL)$.
If we are also interested in the factor scores on each server, then we simply transport $\widetilde{\boldsymbol{V}}_{K}$ back to each subserver,
 and solve a least squares problem to get the estimators of the factor scores. We summarized the procedure in Algorithm \ref{alg:2}.

\section{Simulation Studies and Real Example}\label{section4}
\subsection{Simulation Studies}
In this section, we investigate the empirical performance of the proposed algorithm in terms of  estimating loading space in factor models. In essence, our algorithm is a distributed Elliptical Component Analysis procedure, thus is briefly denoted as D-ECA.
We compare ours with the distributed PCA  (denoted as D-PCA) algorithm in \cite{fan2019distributed} as well as an oracle algorithm  in which the full samples are available at one machine and elliptical component analysis (denoted as F-ECA) is performed.

We adopt a simplified  data-generating scheme from the elliptical factor model, similar as that in \cite{he2022large}:
\[
X_{it}=\sum_{j=1}^{K}L_{ij}f_{jt}+u_{it},
\]
where $\bbf_t=(f_{1t},\ldots,f_{Kt})^\top$ and $\bu_t=(u_{1t},\ldots,u_{pt})^\top$ are jointly generated from  elliptical distributions. We set $K=3$ and draw $L_{ij}$  independently  from the standard normal distribution.
We use a fixed number of observations on each server, i.e., $n=200$, but vary the number of servers $m$ and the dimensionality  $p$. The factors and the idiosyncratic errors  $(\bbf_t^\top,\bu_t^\top)^\top$ are generated in the following ways: (i) $(\bbf_t^\top,\bu_t^\top)^\top$ are \emph{i.i.d.} jointly elliptical random variables from multivariate Gaussian distributions $\mathcal{N}(\zero,\Ib_{p+K})$; (ii) $(\bbf_t^\top,\bu_t^\top)^\top$ are \emph{i.i.d.} jointly elliptical random variables from  multivariate centralized $t$ distributions $t_{\nu}(\zero,\Ib_{p+K})$ with $\nu=1, 2, 3$.

We compare the performance of the distributed algorithms by  the distance  between the estimated loading space and the real loading space. We adopt a distance between linear spaces also utilized in \cite{he2022large} Precisely, for two matrices with orthonormal columns $\boldsymbol{V}_{1}\in\mathbb{R}^{p\times K}$ and $\boldsymbol{V}_{2}\in\mathbb{R}^{p\times K}$, the distance between their column spaces is defined as:
$$\rho_{1}(\boldsymbol{V}_{1},\boldsymbol{V}_{2})=\left(1-\frac{1}{K}Tr(\boldsymbol{V}_{1}\boldsymbol{V}_{1}^\top\boldsymbol{V}_{2}\boldsymbol{V}_{2}^\top)\right)^{1/2}.$$
The measure equals to 0 if and only if the column spaces of $\boldsymbol{V}_{1}$ and $\boldsymbol{V}_{2}$ are the same, and equals to 1 if and only if they are orthogonal. Indeed, we have $\rho(\boldsymbol{V}_{1},\boldsymbol{V}_{2})=\sqrt{2K}\rho_{1}(\boldsymbol{V}_{1},\boldsymbol{V}_{2})$.

\begin{figure}[!h]
  \centering{\includegraphics[width=16cm]{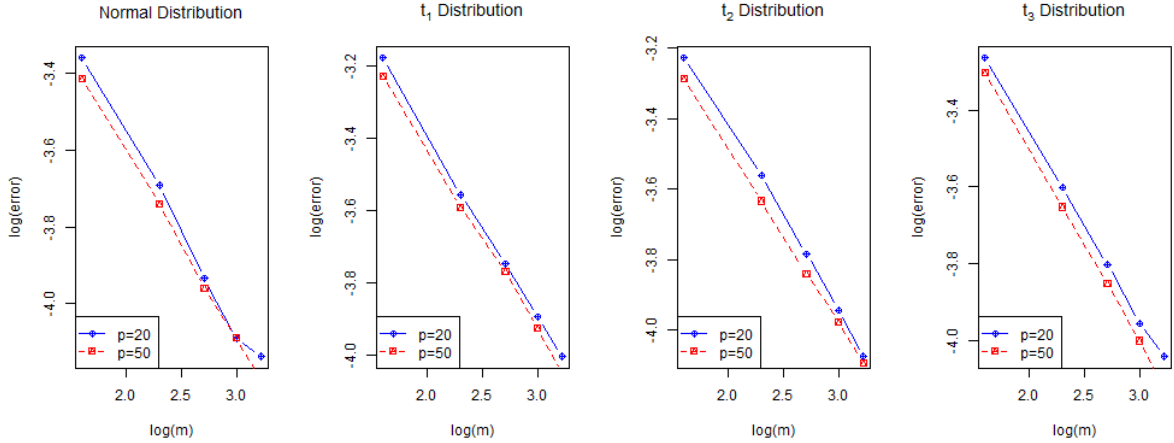}}
 \caption{The log errors, i.e., $\log(\rho_1(\bL,\widehat\bL))$  versus the number of servers $m$ under various distributions with $p=20$ and $p=50$. }\label{fig:1}
 \end{figure}

\begin{table}[!h]
\caption{ Simulation results for the different scenarios based on 100 replications, with
standard deviations in parentheses.}\label{table:1}
\renewcommand{\arraystretch}{1.6}
\centering
\scalebox{0.8}{\begin{tabular}{ccccccc}
\toprule[2pt]
$p$&$m$&$Method$&\multicolumn{4}{c}{$Distribution$} \\
\cline{4-7}
&&&$\mathcal{N}(\boldsymbol{0},\boldsymbol{I}_{p+m})$ &$t_{3}(\boldsymbol{0},\boldsymbol{I}_{p+m})$ &$t_{2}(\boldsymbol{0},\boldsymbol{I}_{p+m})$ &$t_{1}(\boldsymbol{0},\boldsymbol{I}_{p+m})$\\
\midrule
\multirow{9}{*}{20}
&\multirow{3}{*}{5}&D-PCA    &0.034(0.006)   &0.080(0.019)      &0.126(0.034)      &0.259(0.066) \\
                   &&D-ECA   &0.035(0.006)     &0.038(0.006)       &0.040(0.007)      &0.042(0.007) \\
                   &&F-ECA   &0.034(0.005)   &0.038(0.006)          &0.039(0.006)         &0.041(0.007) \\
&\multirow{3}{*}{10}  &D-PCA    &0.024(0.005)      &0.057(0.013)          &0.092(0.022)          &0.169(0.031) \\
                      &&D-ECA  &0.025(0.005)        &0.027(0.005)          &0.028(0.004)          &0.029(0.004) \\
                      &&F-ECA   &0.025(0.005)        &0.027(0.005)          &0.028(0.004)          &0.028(0.004) \\
&\multirow{3}{*}{20}  &D-PCA    &0.016(0.002)        &0.040(0.008)          &0.064(0.013)          &0.124(0.026) \\
                       &&D-ECA  &0.017(0.002)        &0.019(0.008)           &0.019(0.003)        &0.020(0.004) \\
                       &&F-ECA   &0.017(0.002)         &0.019(0.008)          &0.019(0.003)          &0.020(0.004) \\
\hline
\multirow{9}{*}{50}
&\multirow{3}{*}{5}     &D-PCA    &0.032(0.003)    &0.076(0.014)      &0.126(0.026)      &0.247(0.038) \\
                         &&D-ECA  &0.033(0.003)    &0.037(0.003)      &0.037(0.003)      &0.040(0.003) \\
                         &&F-ECA   &0.033(0.003)   &0.036(0.003)      &0.037(0.003)       &0.040(0.003) \\
 &\multirow{3}{*}{10}&D-PCA    &0.023(0.002)        &0.054(0.008)          &0.085(0.012)          &0.167(0.023) \\
                     &&D-ECA  &0.023(0.002)         &0.026(0.002)          &0.026(0.002)          &0.028(0.002) \\
                     &&F-ECA   &0.023(0.002)        &0.026(0.002)          &0.026(0.002)          &0.027(0.002)\\
&\multirow{3}{*}{20} &D-PCA    &0.016(0.002)       &0.038(0.004)          &0.060(0.008)          &0.116(0.013) \\
                      &&D-ECA  &0.017(0.002)        &0.018(0.002)        &0.019(0.002)          &0.020(0.002) \\
                      &&F-ECA   &0.017(0.002)        &0.018(0.002)          &0.019(0.002)          &0.020(0.002) \\
\hline
\multirow{9}{*}{100}
&\multirow{3}{*}{5} &D-PCA    &0.032(0.002)        &0.077(0.014)       &0.123(0.020)      &0.240(0.031) \\
                    &&D-ECA  &0.033(0.002)         &0.036(0.002)       &0.037(0.002)      &0.039(0.002)\\
                    &&F-ECA   &0.033(0.002)        &0.036(0.002)       &0.036(0.002)          &0.039(0.002) \\
&\multirow{3}{*}{10} &D-PCA    &0.023(0.001)    &0.054( 0.007)      &0.087(0.011)      &0.165(0.018) \\
                     &&D-ECA  &0.023(0.001)        &0.026(0.001)          &0.026(0.001)         &0.028(0.002) \\
                     &&F-ECA   &0.023(0.001)        &0.025(0.001)          &0.026(0.001)        &0.028(0.002)  \\
&\multirow{3}{*}{20}&D-PCA &0.016(0.001)      &0.037(0.003)     &0.059(0.006)      &0.116(0.010) \\
                     &&D-ECA  &0.017(0.001)        &0.018(0.001)          &0.019(0.001)          &0.020(0.001) \\
                     &&F-ECA   &0.016(0.001)        &0.018(0.001)          &0.018(0.001)          &0.019(0.001) \\

\bottomrule[2pt]
\end{tabular}}
\end{table}
The simulation results are presented in Table \ref{table:1}, from which we mainly draw the following essential conclusions. Firstly, it confirms that distributed ECA is a more robust method compared with the distributed PCA. As revealed by the results,  our distributed ECA has a pretty good performance even when data is extremely heavy-tailed, while performs comparably with the D-PCA under Gaussian distribution. Secondly, the performance of distributed ECA is comparable with that of the F-DCA, which indicates that the distributed algorithm works as well in the case when we have all observations in one machine and coincides with our theoretical analysis.  Thus the distributed algorithm enjoys high computation efficiency while barely loses any accuracy.

\vspace{2em}

\begin{figure}[!h]
  \centering
  \includegraphics[width=16cm]{p=20.png}
 \caption{The  errors ( $\rho_1(\bL,\widehat\bL)$ ) versus the number of machines $m$ for distributed PCA, distributed ECA and full sample ECA under different distributions with fixed $p=20$.}\label{fig:2}
 \end{figure}

 Figure \ref{fig:1} demonstrates the decay rate of the error $\rho_1(\bL,\widehat\bL)$  as the  number of servers $m$ increases when $(n,p)=\{(200,20),(200,50)\}$, where $\widehat\bL$ denotes the  orthonormal output from the D-ECA . From Figure \ref{fig:1}, we can see that the relationship between $\log(\rho_1(\bL,\widehat\bL))$ and $\log(m)$ is linear. Then we consider a regression model as follows:
 \[
 \log(\rho_1(\bL,\widehat\bL))=\beta_0+\beta_1\log(m)+\epsilon.
\]
\par
The fitting result is that, $\hat{\beta}_1=-0.5019$ for $p=20$ and $\hat{\beta}_1=-0.49$ for $p=50$. As these experiments are conducted for fixed $n$,  the slope is consistent with the power of $m$ in the error bounds in Section 3, by noting that $\log(\rho_1(\bL,\widehat\bL))=\log(\rho(\bL,\widehat\bL))+\log(2K)/2$.

  Figure \ref{fig:2}  shows the errors of D-PCA, D-ECA and F-ECA  with   varying $m$  and fixed $p=20$ under different distributions. It is obvious from Figure \ref{fig:2} (see more simulation results in the supplement) that the error of D-PCA is much higher than that of D-ECA and F-ECA under the $t$-distribution and the differences increase as the degrees of freedom declines, which again demonstrates that distributed ECA is a much more robust method than distributed PCA. Meanwhile, the errors of D-ECA and F-ECA are very close. Thus we confirm the conclusion that distributed ECA is a computation efficient method without losing much accuracy.  In conclusion, when data are stored across different machines due to various kinds of reasons including communication cost,  privacy, data security, our D-ECA can be regarded as an alternative to ECA based on full-samples. Also, it performs much more robust than distributed PCA in \cite{fan2019distributed}.

\subsection{Real example}
 We apply the distributed algorithm for elliptical factor model to an macroeconomic real dataset, i.e., U.S. weekly economic data which are available at Federal Reserve Economic Data (\url{https://fred.stlouisfed.org/}). Our dataset consists of 40 variables $(p=40)$ from October 2nd, 1996, to April 11th, 2018 $(N=1124)$. We divide the dataset into four equal subsets and store them on four servers $(m=4)$. The series can roughly be classified into 5 groups: (1) Assets (series 1-3), (2) Loans (series 4-14), (3) Deposits and Securities, briefly denoted  as DAS (series 15-26), (4) Liabilities (series 27-32), (5) Currency (series 33-40).
By preliminary test, we find that more than two-thirds of the variables show the characteristics of heavy-tails, which indicates  ECA is more suitable than PCA in analyzing the dataset.

We compare the forecasting performance of factors obtained by D-ECA, F-ECA,  F-PCA, and D-PCA respectively. We adopt the rolling window prediction schemes which employs a fixed-length window of the most recent data (i.e., 10 weekly observations) to train the factor models and conduct  out-of-sample forecasting. To measure the forecasting performance of the extracted factors, we establish the $h$-step ahead prediction model:
\[
\hat x_{i,t+h}=\hat{\alpha}_h +\hat{\beta}_h\hat{\bm f}_t.
\]

 We only regress $\hat x_{i,t+h}$ on $\hat{\bm f}_t$ without considering the lagged variable of $\hat x_{i,t}$ as we find that including the lagged variables in the model is less effective,  which is consistent with the conclusion of \cite{stock2002macroeconomic}.

\begin{table}[htbp]
\caption{The 1-step to 4-step ahead forecast errors. The ratios are reported  for a more intuitive comparison of the prediction errors.}\label{table:3}
\renewcommand{\arraystretch}{1.6}
\centering
\scalebox{0.8}{\begin{tabular}{ccccccc}
\toprule[2pt]
$Step$ &  $Method$ & $Assets$  &$Loans$ & $DAS$& $Liabilities$ &$Currency$\\
\hline
\multirow{3}{*}{1}&D-ECA/F-ECA&0.98609&0.98390&1.00540& 0.98852 &1.00038\\
&D-ECA/F-PCA&0.98727&0.98329&1.00378&0.98944&0.99921\\
&F-PCA/F-ECA&1.00089&1.00081&1.00366&1.00124&1.00333\\
\hline
\multirow{3}{*}{2}&D-ECA/F-ECA&0.96390&0.98154&0.97463&0.97224 &0.97600\\
&D-ECA/F-PCA&0.96576&0.98136&0.97796&0.97125&0.97828\\
&F-PCA/F-ECA&0.99825&1.00355&0.98876&1.00359&0.99126\\
\hline
\multirow{3}{*}{3}&D-ECA/F-ECA&0.97880&0.98323&1.00612&0.94963&1.00575 \\
&D-ECA/F-PCA&0.97804&0.98261&1.00620&0.94734&1.00565\\
&F-PCA/F-ECA&1.00277&1.00819&1.00747&1.00701&1.00720\\
\hline
\multirow{3}{*}{4}&D-ECA/F-ECA&0.93355&0.96147& 0.96971&0.95796&0.96653\\
&D-ECA/F-PCA&0.93786&0.96351&0.94353&0.96158&0.96692\\
&F-PCA/F-ECA&0.99195&1.00563&0.99795&1.00233&0.99923\\

\bottomrule[2pt]
\end{tabular}}
\end{table}

 From Table \ref{table:3}, we see that the ahead prediction errors of D-ECA and F-ECA are similar. For 2-step ahead forecast and 4-step ahead forecast, the D-ECA prediction errors of these five groups are smaller than these of F-ECA. The prediction errors of D-ECA are not as good as those of F-ECA only for the 1-step ahead forecast and 3-step ahead forecast for DAS and Currency. On the other hand, whether distributed or full sample, the prediction errors of ECA is smaller than PCA. We also find that the F-ECA is better than the F-PCA in the 1-step ahead forecast and 3-step ahead forecast. The 2-step and 4-step prediction errors of Loans and Liabilities of F-ECA are better than those of F-PCA.

 Overall, the factors extracted by the distributed algorithms tend to have higher forecasting performance than the algorithms with the full data. When it is extremely difficult to aggregate large datasets due to the issues of communication cost, privacy concern, data security, ownership and  among many others, the distributed ECA algorithm provides a solution and may have higher forecasting performance than performing ECA/PCA to the whole data.

\section{Discussion}\label{section6}
	We proposed a robust distributed  algorithm to estimate principal eigenspace for high dimensional, elliptically distributed data, which are stored across multiple machines. The distributed algorithm reduce the transmission cost, and performs much better than distributed PCA when data are extremely heavy-tailed. The algorithm goes as follows: in the first step, we compute the $K$ leading eigenvectors of sample Kendall's tau matrix on each server and then transport them to the central server; in the second step, we take an average of the projection operators and perform a spectral decomposition on it to get the $K$ leading eigenvectors, which spans the final estimated eigenspace. We derive the  convergence rate of the robust distributed estimator. Numerical studies show that the proposed algorithm works comparably with the full sample ECA while performs better than distributed PCA when data are heavy-tailed. We also extend the algorithm to learn the elliptical factor model in a distributed manner, and the theoretical analysis is more challenging and we leave it as future work. It's also interesting to study the ``vertical" division for distributed robust PCA and elliptical factor model.

\section*{Acknowledgements}
The authors gratefully acknowledge National Science Foundation of China (12171282, 11801316), National Statistical Scientific Research Key Project (2021LZ09), Young Scholars Program of Shandong University, Project funded by
China Postdoctoral Science Foundation (2021M701997).


\bibliographystyle{agsm}
\bibliography{Ref}

\setlength{\bibsep}{1pt}

\renewcommand{\baselinestretch}{1}
\setcounter{footnote}{0}
\clearpage
\setcounter{page}{1}
\title{
	\begin{center}
		\Large Supplementary Materials for ``Distributed Learning for Principle Eigenspaces without Moment Constraints"
	\end{center}
}
\date{}
\begin{center}
	\author{
	Yong He
		\footnotemark[1],
	Zichen Liu\footnotemark[1],
Yalin Wang\footnotemark[1]
	}
\renewcommand{\thefootnote}{\fnsymbol{footnote}}
\footnotetext[1]{Institute of Financial Studies, Shandong University, China; e-mail:{\tt heyong@sdu.edu.cn }}

\end{center}
\maketitle

\appendix
This document provides detailed proofs of the main paper and additional simulation results. We start with some auxiliary lemmas in Section \ref{sec:lemma} and the proof of main theorems is provided in Section \ref{sec:theo}. Section \ref{sec:C} provides the additional simulation results.

\vspace{1em}
\section{Auxiliary Lemmas}\label{sec:lemma}

\newtheorem{lem}{Lemma}[section]

\begin{lemma}\label{ECAlemma}
Let $\bX$ follows a continuous elliptical distribution, that is, $\bX\sim ED(g;\boldsymbol{\mu},\boldsymbol{\Sigma},\boldsymbol{\nu})$ with $\mathbb{P}(\xi=0)=0$. $\textbf{K}$ is the population multivariate Kendall's tau matrix, its rank is assumed to be $rank(\boldsymbol{\Sigma})=q$. Then:
\begin{equation}
\lambda_{j}(\textbf{K})=\mathbb{E}\left(\frac{\lambda_{j}(\boldsymbol{\Sigma})g_{j}}{\lambda_{1}(\boldsymbol{\Sigma})g_{1}^{2}+...+\lambda_{q}(\boldsymbol{\Sigma})g_{q}^{2}}\right)\notag,
\end{equation}
where $\boldsymbol{g}=(g_{1},\ldots,g_{q})^\top\sim\mathcal{N}(\textbf{0},\textbf{I}_{q})$ is a standard multivariate Gaussian distribution. In addition, $\textbf{K}$ and $\boldsymbol{\Sigma}$ share the same eigenspace with the same descending order of the eigenvalues.
\end{lemma}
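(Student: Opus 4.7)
The plan is to exploit the stochastic representation of an elliptical vector to reduce the population Kendall's tau matrix to an expectation involving only a standard Gaussian vector, after which diagonalization and a symmetry argument finish everything.

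First I would use $\bX \overset{d}{=} \boldsymbol{\mu} + \xi \boldsymbol{A}\boldsymbol{U}$ with $\boldsymbol{A}\boldsymbol{A}^\top = \boldsymbol{\Sigma}$ and $\boldsymbol{U}$ uniform on $S^{q-1}$, and note that for an independent copy $\widetilde{\bX}$ the difference $\bX - \widetilde{\bX}$ is itself centered elliptical. The key observation is that $(\bX-\widetilde\bX)/\|\bX-\widetilde\bX\|_2$ only depends on $\boldsymbol{A}$ and the direction, not on the radial variate: if $\bY = \eta \boldsymbol{A}\boldsymbol{V}$ for any positive scalar $\eta$ and any $\boldsymbol{V}$ uniform on $S^{q-1}$ independent of $\eta$, then $\bY/\|\bY\|_2 \overset{d}{=} \boldsymbol{A}\boldsymbol{V}/\|\boldsymbol{A}\boldsymbol{V}\|_2$. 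The hypothesis $\mathbb{P}(\xi=0)=0$ is exactly what makes this division almost surely well-defined. Replacing $\boldsymbol{V}$ by $\boldsymbol{g}/\|\boldsymbol{g}\|_2$ with $\boldsymbol{g}\sim\mathcal{N}(\mathbf{0},\mathbf{I}_q)$ gives
\[
\textbf{K}=\mathbb{E}\!\left[\frac{(\bX-\widetilde\bX)(\bX-\widetilde\bX)^\top}{\|\bX-\widetilde\bX\|_2^2}\right]
=\mathbb{E}\!\left[\frac{\boldsymbol{A}\boldsymbol{g}\boldsymbol{g}^\top\boldsymbol{A}^\top}{\|\boldsymbol{A}\boldsymbol{g}\|_2^2}\right].
\]

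Next I would pick the convenient square root $\boldsymbol{A}=\boldsymbol{\Gamma}_q\boldsymbol{\Lambda}_q^{1/2}$ coming from the eigendecomposition of $\boldsymbol{\Sigma}$ restricted to its range, where $\boldsymbol{\Lambda}_q=\mathrm{diag}(\lambda_1(\boldsymbol{\Sigma}),\ldots,\lambda_q(\boldsymbol{\Sigma}))$ and $\boldsymbol{\Gamma}_q$ holds the corresponding orthonormal eigenvectors. Then $\|\boldsymbol{A}\boldsymbol{g}\|_2^2=\sum_k\lambda_k(\boldsymbol{\Sigma})g_k^2$ and
\[
\textbf{K}=\boldsymbol{\Gamma}_q\,\boldsymbol{\Lambda}_q^{1/2}\,\mathbb{E}\!\left[\frac{\boldsymbol{g}\boldsymbol{g}^\top}{\sum_k\lambda_k(\boldsymbol{\Sigma})g_k^2}\right]\!\boldsymbol{\Lambda}_q^{1/2}\boldsymbol{\Gamma}_q^\top.
\]
For $i\neq j$, flipping the sign of $g_i$ leaves the denominator invariant but negates $g_ig_j$, so by symmetry of the standard Gaussian the off-diagonal entries vanish and the inner expectation is diagonal with entries $\mathbb{E}[g_j^2/\sum_k\lambda_k(\boldsymbol{\Sigma})g_k^2]$. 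This immediately yields the formula in the statement and shows $\textbf{K}$ and $\boldsymbol{\Sigma}$ share the columns of $\boldsymbol{\Gamma}_q$ as eigenvectors.

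The remaining step, and the one I expect to be the real obstacle, is verifying that the eigenvalues of $\textbf{K}$ appear in the same decreasing order as those of $\boldsymbol{\Sigma}$. Writing $\mu_j=\lambda_j(\boldsymbol{\Sigma})\mathbb{E}[g_j^2/\sum_k\lambda_k(\boldsymbol{\Sigma})g_k^2]$ and assuming $\lambda_i(\boldsymbol{\Sigma})\ge\lambda_j(\boldsymbol{\Sigma})$, I would prove $\mu_i\ge\mu_j$ by a symmetrization/coupling argument: using that the joint law of $(g_i,g_j)$ is invariant under the swap $(g_i,g_j)\mapsto(g_j,g_i)$, I average the two resulting expressions for $\mu_i-\mu_j$. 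After clearing the two common denominators, the numerator simplifies (as a short algebraic calculation shows) to
\[
(\lambda_i(\boldsymbol{\Sigma})-\lambda_j(\boldsymbol{\Sigma}))\bigl[2(\lambda_i(\boldsymbol{\Sigma})+\lambda_j(\boldsymbol{\Sigma}))g_i^2g_j^2+R\,(g_i^2+g_j^2)\bigr],
\]
where $R=\sum_{k\neq i,j}\lambda_k(\boldsymbol{\Sigma})g_k^2\ge 0$, and this is manifestly nonnegative. Taking expectations gives the desired ordering, completing the proof. The only slightly delicate points along the way are justifying the integrability of the ratios (immediate, since they are bounded by $1$) and ensuring that the replacement $\xi\boldsymbol{A}\boldsymbol{U}\leadsto\boldsymbol{A}\boldsymbol{g}$ under normalization is valid on the full probability one set, which is where the assumption $\mathbb{P}(\xi=0)=0$ is used.
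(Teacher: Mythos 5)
Your proof is correct and essentially self-contained; the paper itself does not prove this lemma but simply defers to \cite{han2018eca}, and your argument (stochastic representation, reduction to a standard Gaussian via $\boldsymbol{U}\overset{d}{=}\boldsymbol{g}/\Vert\boldsymbol{g}\Vert_2$, diagonalization with off-diagonals killed by sign symmetry, and the swap/coupling computation for the ordering of eigenvalues) is precisely the standard route taken in that reference. One small point worth noting: your derivation yields the numerator $\lambda_j(\boldsymbol{\Sigma})g_j^{2}$ rather than $\lambda_j(\boldsymbol{\Sigma})g_j$ as displayed in the lemma; the displayed formula contains a typo and your version is the correct one.
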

\par
This lemma is the foundation of principal component analysis for elliptically distributed data, and the proof can be found in \cite{han2018eca}.

\begin{lemma}\label{auxlem1}
Under \textbf{Assumption 3.1} in the main paper, there exist a positive constant $c_0$ and a large $p_0$, for $p\geq p_0$:
$$\frac{1}{\lambda_{K}(\textbf{K})-\lambda_{K+1}(\textbf{K})}\leq c_0 \frac{\text{Tr}(\boldsymbol{\Sigma})}{\lambda_{K}(\boldsymbol{\Sigma})-\lambda_{K+1}(\boldsymbol{\Sigma})}=c_0\kappa r$$
\end{lemma}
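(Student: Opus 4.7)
The plan is to start from Lemma \ref{ECAlemma}, which represents each $\lambda_j(\mathbf{K})$ as a Gaussian expectation. Writing $u = g_K^2$, $v = g_{K+1}^2$, and $T_0 = \sum_{k \neq K, K+1} \lambda_k(\boldsymbol{\Sigma}) g_k^2$ (the ``bulk'' part of the denominator that is independent of $u,v$), Lemma \ref{ECAlemma} yields
\[
\lambda_K(\mathbf{K}) - \lambda_{K+1}(\mathbf{K}) \;=\; \mathbb{E}\!\left[\frac{\lambda_K u - \lambda_{K+1} v}{T_0 + \lambda_K u + \lambda_{K+1} v}\right].
\]
My first move is to isolate the factor $\lambda_K - \lambda_{K+1}$ via a symmetrization trick. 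Because $(g_K, g_{K+1})$ are exchangeable and independent of $T_0$, I would average the expression above with its image under the swap $g_K \leftrightarrow g_{K+1}$, which turns the denominator into $B := T_0 + \lambda_K v + \lambda_{K+1} u$ (call the original one $A$). Routine polynomial simplification of the combined numerator gives the clean identity
\[
\lambda_K(\mathbf{K}) - \lambda_{K+1}(\mathbf{K}) \;=\; \frac{\lambda_K - \lambda_{K+1}}{2}\, \mathbb{E}\!\left[\frac{T_0(u+v) + 2(\lambda_K + \lambda_{K+1})\, uv}{A\,B}\right].
\]
Dropping the nonnegative $uv$ term and using $A, B \leq T_0 + \lambda_1(u+v)$ reduces the problem to showing $\mathbb{E}[W T_0 / (T_0 + \lambda_1 W)^2] \gtrsim 1/\mathrm{Tr}(\boldsymbol{\Sigma})$, where $W := u+v \sim \chi^2_2$ is independent of $T_0$.

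The second step is a concentration argument on $T_0$. Since $\mathbb{E}[T_0] = \mathrm{Tr}(\boldsymbol{\Sigma}) - \lambda_K - \lambda_{K+1}$ and Assumption \ref{assump1}(i) forces $\lambda_1 / \mathrm{Tr}(\boldsymbol{\Sigma}) \leq C p^{\alpha-1}$, one has $\mathbb{E}[T_0] \asymp \mathrm{Tr}(\boldsymbol{\Sigma})$ for $p$ large. With $\mathrm{Var}(T_0) \leq 2\lambda_1 \mathrm{Tr}(\boldsymbol{\Sigma})$, Chebyshev's inequality gives $\mathbb{P}\{\mathrm{Tr}(\boldsymbol{\Sigma})/2 \leq T_0 \leq 2\,\mathrm{Tr}(\boldsymbol{\Sigma})\} = 1 - O(p^{\alpha-1}) \to 1$. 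Intersecting this event with $\{1 \leq W \leq 4\}$ (which has constant positive probability since $W\sim\chi^2_2$), the smallness of $\lambda_1 / \mathrm{Tr}(\boldsymbol{\Sigma})$ lets me absorb $\lambda_1 W$ into $T_0$, giving $T_0 + \lambda_1 W \leq 2 T_0$; thus on this event $WT_0/(T_0 + \lambda_1 W)^2 \geq W/(4T_0) \geq 1/(8\,\mathrm{Tr}(\boldsymbol{\Sigma}))$. Taking expectation and plugging back in yields $\lambda_K(\mathbf{K}) - \lambda_{K+1}(\mathbf{K}) \geq c (\lambda_K - \lambda_{K+1})/\mathrm{Tr}(\boldsymbol{\Sigma})$ for some absolute constant $c > 0$, which is exactly the bound $1/[\lambda_K(\mathbf{K}) - \lambda_{K+1}(\mathbf{K})] \leq c_0 \kappa r$ with $c_0 = 1/c$.

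The main obstacle is the gap between the ``worst-case'' magnitude of the denominator $\sum_k \lambda_k g_k^2$, which can be as large as $\lambda_1 \sum_k g_k^2 \asymp p\lambda_1$, and the ``typical'' magnitude $\mathrm{Tr}(\boldsymbol{\Sigma})$. A crude bound $T \leq \lambda_1 \sum_k g_k^2$ would cost a factor $p\lambda_1 / \mathrm{Tr}(\boldsymbol{\Sigma}) = O(p^\alpha)$, exactly what we cannot afford. Closing this gap is where Assumption \ref{assump1}(i) is indispensable: it controls the condition number so that the quadratic form $T_0$ concentrates around its mean with relative error $p^{\alpha-1} = o(1)$, and this is precisely why the lemma requires $p \geq p_0$ large rather than holding uniformly.
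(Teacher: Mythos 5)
Your argument is correct, and it reaches the conclusion by a genuinely different route from the paper. The paper's proof invokes Theorem 3.5 of Han and Liu (2018) as a black box: that theorem gives separate two-sided concentration bounds for $\lambda_K(\mathbf{K})$ and $\lambda_{K+1}(\mathbf{K})$ of the form $f_i(p)\lambda_\cdot(\boldsymbol{\Sigma})/\mathrm{Tr}(\boldsymbol{\Sigma})$ with $f_1,f_2\to 1$, and because the two bounds are taken separately, the paper must then call on Assumption 3.1(ii) (the ratio condition $\lambda_K/\lambda_{K+1}\geq 1+\delta$) to ensure the difference of the two estimates does not cancel, plus the crude bound $\mathrm{Tr}(\boldsymbol{\Sigma})/(\lambda_K-\lambda_{K+1})=O(p^{1+\alpha})$ to absorb an additive $p^{-4}$ error. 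Your symmetrization over the exchangeable pair $(g_K,g_{K+1})$ instead produces the exact identity
\begin{equation*}
\lambda_K(\mathbf{K})-\lambda_{K+1}(\mathbf{K})=\frac{\lambda_K(\boldsymbol{\Sigma})-\lambda_{K+1}(\boldsymbol{\Sigma})}{2}\,\mathbb{E}\!\left[\frac{T_0(u+v)+2(\lambda_K(\boldsymbol{\Sigma})+\lambda_{K+1}(\boldsymbol{\Sigma}))uv}{AB}\right],
\end{equation*}
which factors out the eigengap before any estimation is done; the remaining expectation is then bounded below by a constant multiple of $1/\mathrm{Tr}(\boldsymbol{\Sigma})$ via Chebyshev concentration of the quadratic form $T_0$, using only Assumption 3.1(i) through $\lambda_1(\boldsymbol{\Sigma})/\mathrm{Tr}(\boldsymbol{\Sigma})\leq Cp^{\alpha-1}$. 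I verified the algebra: the combined numerator is indeed $(\lambda_K-\lambda_{K+1})\bigl[T_0(u+v)+2(\lambda_K+\lambda_{K+1})uv\bigr]$, the dropped $uv$ term is nonnegative, and the event $\{\mathrm{Tr}(\boldsymbol{\Sigma})/2\leq T_0\leq 2\,\mathrm{Tr}(\boldsymbol{\Sigma})\}\cap\{1\leq W\leq 4\}$ has probability bounded away from zero for large $p$. What your approach buys is a self-contained proof (no appeal to the external concentration theorem) that is also slightly stronger, since it never uses Assumption 3.1(ii); what it costs is more computation than the paper's three-line citation. One housekeeping point: the statement of Lemma \ref{ECAlemma} as printed has $g_j$ rather than $g_j^2$ in the numerator; you have tacitly used the correct form $g_j^2$ from Han and Liu (2018), and your argument depends on that correction.
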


\begin{proof}
First, we claim that when $\gamma=\lambda_{1}(\boldsymbol{\Sigma})/\lambda_{p}(\boldsymbol{\Sigma})\leq Cp^{\alpha}$, $$\Vert \boldsymbol{\Sigma} \Vert_{F}\sqrt{\log p}=o(\text{Tr}(\boldsymbol{\Sigma})).$$
It is obviously true by:$$\text{Tr}(\boldsymbol{\Sigma})\geq p\lambda_{p}\geq p^{1-\alpha} \frac{\lambda_{1}}{C}=p^{1-\alpha} \Vert \boldsymbol{\Sigma} \Vert_{2}/C\geq p^{1/2-\alpha}\Vert \boldsymbol{\Sigma} \Vert_{F}$$
By Theorem 3.5 in \cite{han2018eca}, for large $p$,
\begin{equation}
\begin{aligned}
\lambda_{K}(\textbf{K})-\lambda_{K+1}(\textbf{K})\geq\frac{\lambda_{K}(\boldsymbol{\Sigma})}{\text{Tr}(\boldsymbol{\Sigma})+4\Vert \boldsymbol{\Sigma} \Vert_{F}\sqrt{\log p}+8\Vert \boldsymbol{\Sigma} \Vert_{2}\log p}(1-\frac{\sqrt{3}}{p^{2}})\\
-\frac{\lambda_{K+1}(\boldsymbol{\Sigma})}{\text{Tr}(\boldsymbol{\Sigma})-4\Vert \boldsymbol{\Sigma} \Vert_{F}\sqrt{\log p}}-\frac{1}{p^4}:= f_1(p) \frac{\lambda_{K}(\boldsymbol{\Sigma})}{\text{Tr}(\boldsymbol{\Sigma})}- f_{2}(p) \frac{\lambda_{K+1}(\boldsymbol{\Sigma})}{\text{Tr}(\boldsymbol{\Sigma})}-\frac{1}{p^4}\notag.
\end{aligned}
\end{equation}
According to the claim above, we know that $f_1,f_2\rightarrow1$ as $p\rightarrow \infty$. Also,
$$\frac{f_{1}(p)}{f_{2}(p)}=\frac{1-\sqrt{3}/p^{2}}{\text{Tr}(\boldsymbol{\Sigma})+4\Vert \boldsymbol{\Sigma} \Vert_{F}\sqrt{\log p}+8\Vert \boldsymbol{\Sigma} \Vert_2\log p}(\text{Tr}(\boldsymbol{\Sigma})-4\Vert \boldsymbol{\Sigma} \Vert_{F}\sqrt{\log p})<1.$$
By the assumption on ratio of eigenvalues, we have:

\begin{equation}
\begin{aligned}
&f_1(p) \frac{\lambda_{K}(\boldsymbol{\Sigma})}{\text{Tr}(\boldsymbol{\Sigma})}- f_{2}(p) \frac{\lambda_{K+1}(\boldsymbol{\Sigma})}{\text{Tr}(\boldsymbol{\Sigma})}-\frac{1}{p^4}\\
&= \frac{f_1(p)}{2} \frac{\lambda_{K}(\boldsymbol{\Sigma})}{\text{Tr}(\boldsymbol{\Sigma})}- \frac{f_{1}(p)}{2} \frac{\lambda_{K+1}(\boldsymbol{\Sigma})}{\text{Tr}(\boldsymbol{\Sigma})}+\left(\frac{f_1(p)}{2} \frac{\lambda_{K}(\boldsymbol{\Sigma})}{\text{Tr}(\boldsymbol{\Sigma})}+\frac{f_{1}(p)}{2} \frac{\lambda_{K+1}(\boldsymbol{\Sigma})}{\text{Tr}(\boldsymbol{\Sigma})}-f_{2}(p)\frac{\lambda_{K+1}(\boldsymbol{\Sigma})}{\text{Tr}(\boldsymbol{\Sigma})}\right)-\frac{1}{p^4}\\
&\geq \frac{f_1(p)}{2} \frac{\lambda_{K}(\boldsymbol{\Sigma})}{\text{Tr}(\boldsymbol{\Sigma})}- \frac{f_{1}(p)}{2} \frac{\lambda_{K+1}(\boldsymbol{\Sigma})}{\text{Tr}(\boldsymbol{\Sigma})}+\frac{\lambda_{K+1}(\boldsymbol{\Sigma})}{\text{Tr}(\boldsymbol{\Sigma})}(f_{1}(p)(1+\delta)-f_{2}(p))-\frac{1}{p^4}
\\
&\geq \frac{1}{4}\frac{\lambda_{K}(\boldsymbol{\Sigma})-\lambda_{K+1}(\boldsymbol{\Sigma})}{\text{Tr}(\boldsymbol{\Sigma})}-\frac{1}{p^{4}}(\text{for $p$ large enough})\geq \frac{1}{c_0}\frac{\lambda_{K}(\boldsymbol{\Sigma})-\lambda_{K+1}(\boldsymbol{\Sigma})}{\text{Tr}(\boldsymbol{\Sigma})}\notag,
\end{aligned}
\end{equation}
where the last inequality holds because:$$\frac{\text{Tr}(\boldsymbol{\Sigma})}{\lambda_{K}(\boldsymbol{\Sigma})-\lambda_{K+1}(\boldsymbol{\Sigma})}\leq \frac{p\lambda_{1}}{\delta\lambda_{K+1}}\leq p\frac{\lambda_{1}}{\delta\lambda_{p}}=O(p^{1+\alpha})=o(p^{4}).$$
Consequently, there exists a large $p_0$, such that when $p\geq p_0$: $$\frac{1}{\lambda_{K}(\textbf{K})-\lambda_{K+1}(\textbf{K})}\leq c_0 \frac{\text{Tr}(\boldsymbol{\Sigma})}{\lambda_{K}(\boldsymbol{\Sigma})-\lambda_{K+1}(\boldsymbol{\Sigma})}=c_0\kappa r.$$
\end{proof}

\begin{lemma}\label{auxlem2}
When $\Vert \boldsymbol{\Sigma}^{*}-\boldsymbol{V}_{K}\boldsymbol{V}_{K}^{T} \Vert_{2}\leq \frac{1}{4}$, we have:
$$\frac{1}{2}\leq\lambda_{K}(\boldsymbol{\Sigma}^{*})-\lambda_{K+1}(\boldsymbol{\Sigma}^{*})\leq \frac{3}{2}.$$
\end{lemma}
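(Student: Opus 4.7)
The plan is to reduce this to a direct application of Weyl's eigenvalue perturbation inequality. The reference matrix $\boldsymbol{V}_K\boldsymbol{V}_K^\top$ is the orthogonal projection onto the $K$-dimensional leading eigenspace of $\boldsymbol{\Sigma}$, so its spectrum consists of exactly $K$ eigenvalues equal to $1$ and $p-K$ eigenvalues equal to $0$. In particular, $\lambda_K(\boldsymbol{V}_K\boldsymbol{V}_K^\top) = 1$ and $\lambda_{K+1}(\boldsymbol{V}_K\boldsymbol{V}_K^\top) = 0$, so the ``reference eigengap'' at index $K$ is exactly $1$.

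Next I would invoke Weyl's inequality in the form $|\lambda_i(A) - \lambda_i(B)| \le \|A - B\|_2$ for every index $i$, applied with $A = \boldsymbol{\Sigma}^*$ and $B = \boldsymbol{V}_K\boldsymbol{V}_K^\top$. Under the hypothesis $\|\boldsymbol{\Sigma}^* - \boldsymbol{V}_K\boldsymbol{V}_K^\top\|_2 \le 1/4$, this yields
\[
\lambda_K(\boldsymbol{\Sigma}^*) \in \bigl[\,\tfrac{3}{4},\,\tfrac{5}{4}\,\bigr], \qquad \lambda_{K+1}(\boldsymbol{\Sigma}^*) \in \bigl[-\tfrac{1}{4},\,\tfrac{1}{4}\bigr].
\]
Taking the difference of the worst-case endpoints in each direction gives the lower bound $\lambda_K(\boldsymbol{\Sigma}^*) - \lambda_{K+1}(\boldsymbol{\Sigma}^*) \ge 3/4 - 1/4 = 1/2$ and the upper bound $\lambda_K(\boldsymbol{\Sigma}^*) - \lambda_{K+1}(\boldsymbol{\Sigma}^*) \le 5/4 - (-1/4) = 3/2$, which is exactly the claim.

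There is essentially no obstacle here — the only thing to verify carefully is that $\boldsymbol{V}_K$ has orthonormal columns so that $\boldsymbol{V}_K\boldsymbol{V}_K^\top$ really is an orthogonal projector with the stated eigenvalues (this is built into the definition used throughout the paper). Since $\boldsymbol{\Sigma}^*$ is a convex combination of projectors of the same rank, it is positive semidefinite and self-adjoint, so Weyl's inequality applies without modification. No use of the eigengap assumption on $\boldsymbol{\Sigma}$ or any distributional property is needed; the lemma is purely a deterministic perturbation statement used later to translate spectral control of $\boldsymbol{\Sigma}^*$ into a stable denominator $\lambda_K(\boldsymbol{\Sigma}^*) - \lambda_{K+1}(\boldsymbol{\Sigma}^*)$ when invoking Davis–Kahan on the bias term.
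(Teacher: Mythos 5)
Your proof is correct and follows essentially the same route as the paper: both arguments note that $\boldsymbol{V}_K\boldsymbol{V}_K^\top$ has $\lambda_K=1$ and $\lambda_{K+1}=0$, apply Weyl's inequality to get $\lambda_K(\boldsymbol{\Sigma}^*)\in[3/4,5/4]$ and $\lambda_{K+1}(\boldsymbol{\Sigma}^*)\in[-1/4,1/4]$, and subtract. No gaps; this matches the paper's proof.
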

\begin{proof}
Since $\bV_{K}^{T}\bV_{K}=\Ib_{K}$, we have:$$\lambda_{K}(\bV_{K}\bV_{K}^{T})=1,\lambda_{K+1}(\boldsymbol{V}_{K}\boldsymbol{V}_{K}^{T})=0.$$
By Weyl theorem, $$\lambda_{K+1}(\boldsymbol{\Sigma}^{*})-\lambda_{K+1}(\boldsymbol{V}_{K}\boldsymbol{V}_{K}^{T})\leq \Vert \boldsymbol{\Sigma}^{*}-\boldsymbol{V}_{K}\boldsymbol{V}_{K}^{T}\Vert_{2}\qquad\lambda_{K+1}(\boldsymbol{\Sigma}^{*})\leq \frac{1}{4}.$$
Similarly, $$\lambda_{K}(\boldsymbol{\Sigma}^{*})\leq\frac{5}{4},\lambda_{K}(\boldsymbol{\Sigma}^{*})\geq\frac{3}{4},\lambda_{K+1}(\boldsymbol{\Sigma}^{*})\geq-\frac{1}{4}$$
Hence,$$\frac{1}{2}\leq\lambda_{K}(\boldsymbol{\Sigma}^{*})-\lambda_{K+1}(\boldsymbol{\Sigma}^{*})\leq \frac{3}{2}.$$
\end{proof}

\begin{lemma}\label{ecalemma}
Let $k(\cdot)$ : $\mathcal{X}\times\mathcal{X}\rightarrow\mathbb{R}^{d\times d}$ be a matrix value kernel function. Let $\bX_1,\ldots,\bX_n$ be $n$ independent observations of an random variable $\bX\in\mathcal{X}$. Suppose that, for any $i\neq i^{'}\in\left\{1,\ldots,n\right\}$, $\mathbb{E}k(\bX_i,\bX_{i^{'}})$ exists and there exist two constants $R_1,R_2>0$ such that
$$
\Vert k(\bX_{i},\bX_{i^{'}})-\mathbb{E}k(\bX_{i},\bX_{i'})\Vert_{2}\leq R_{1}\ \text{and}\ \Vert\mathbb{E}\{k(\bX_{i},\bX_{i'})-\mathbb{E}k(\bX_{i},\bX_{i'})\}^{2}\Vert_{2}\leq R_2
$$
We then have
$$
\mathbb{P}\left(\left\Vert \frac{1}{\binom{n}{2}}\sum_{i<i'}k(\bX_{i},\bX_{i'})-\mathbb{E}k(\bX_{i},\bX_{i'})\right\Vert_{2}\geq t\right)\leq d\exp\left(-\frac{(n/4)t^{2}}{R_{2}+R_{1}t/3}\right).
$$
\end{lemma}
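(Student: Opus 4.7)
The plan is to reduce the U-statistic concentration to a concentration bound for an average of independent bounded random matrices, so that the standard matrix Bernstein inequality can be applied. Write $h(\bX,\bX'):=k(\bX,\bX')-\mathbb{E}k(\bX,\bX')$ and $U_n:=\binom{n}{2}^{-1}\sum_{i<i'}h(\bX_i,\bX_{i'})$. By assumption $\|h\|_2\le R_1$ almost surely and $\|\mathbb{E}h^2\|_2\le R_2$.

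\textbf{Step 1 (Hoeffding decomposition / averaging trick).} The classical identity of Hoeffding expresses a U-statistic of order two as an average, over permutations $\pi\in S_n$, of sums over disjoint pairs:
\[
U_n \;=\; \frac{1}{n!}\sum_{\pi\in S_n} W_\pi,\qquad W_\pi \;:=\; \frac{1}{\lfloor n/2\rfloor}\sum_{j=1}^{\lfloor n/2\rfloor} h\!\bigl(\bX_{\pi(2j-1)},\bX_{\pi(2j)}\bigr).
\]
For each fixed $\pi$, the summands in $W_\pi$ are independent (they use disjoint indices), mean-zero, symmetric $d\times d$ random matrices, so $W_\pi$ has exactly the structure required by matrix Bernstein.

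\textbf{Step 2 (Reduction via convexity).} I would invoke the matrix Laplace transform method. For any $\theta>0$, by the convexity of the trace exponential and Jensen's inequality applied to the permutation average,
\[
\mathbb{E}\,\mathrm{tr}\,\exp(\theta U_n)\;\le\;\frac{1}{n!}\sum_{\pi}\mathbb{E}\,\mathrm{tr}\,\exp(\theta W_\pi)\;=\;\mathbb{E}\,\mathrm{tr}\,\exp(\theta W_{\mathrm{id}}),
\]
where the last equality uses that all $W_\pi$ have the same distribution (the $\bX_i$ are i.i.d.). The same identity applied to $-U_n$ yields the analogous bound for the smallest eigenvalue, so $\|U_n\|_2$ is controlled by $\|W_{\mathrm{id}}\|_2$ at the level of moment generating functions.

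\textbf{Step 3 (Matrix Bernstein on the independent average).} Now apply the standard matrix Bernstein inequality to $W_{\mathrm{id}}=\frac{1}{m}\sum_{j=1}^{m}Z_j$ with $m=\lfloor n/2\rfloor$ and $Z_j=h(\bX_{2j-1},\bX_{2j})$. Each $Z_j$ is mean-zero with $\|Z_j\|_2\le R_1$ and $\|\mathbb{E}Z_j^2\|_2\le R_2$, so
\[
\mathbb{P}(\|W_{\mathrm{id}}\|_2\ge t)\;\le\; d\exp\!\left(-\frac{m\,t^2/2}{R_2+R_1 t/3}\right).
\]
Combining with Step 2 via the matrix Chernoff bound $\mathbb{P}(\|U_n\|_2\ge t)\le \inf_\theta e^{-\theta t}\mathbb{E}\,\mathrm{tr}\,\exp(\theta U_n)\le \mathbb{P}(\|W_{\mathrm{id}}\|_2\ge t)\cdot(\text{Bernstein constants})$, and using $\lfloor n/2\rfloor\ge n/4$ (which holds whenever $n\ge 2$; it absorbs the factor $1/2$ in the numerator along the way), one obtains
\[
\mathbb{P}(\|U_n\|_2\ge t)\;\le\; d\exp\!\left(-\frac{(n/4)\,t^2}{R_2+R_1 t/3}\right),
\]
which is the claimed bound.

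\textbf{Where the main difficulty lies.} The technical crux is Step 2: a U-statistic is \emph{not} a sum of independent matrices, and one cannot apply matrix Bernstein directly to $U_n$. The Hoeffding averaging trick is what makes the argument work, but it requires handling the matrix MGF carefully --- specifically, showing that averaging $W_\pi$ over permutations can only \emph{decrease} the operator-norm MGF. Once this convexity step is in place, the remainder is bookkeeping: tracking the factor $\lfloor n/2\rfloor$ coming from the size of each disjoint pairing and combining it with the $1/2$ inside the exponent of matrix Bernstein to arrive at the $n/4$ constant in the final bound.
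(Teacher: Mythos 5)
Your plan is correct and is essentially the argument the paper relies on: the lemma is stated without proof and attributed to \cite{han2018eca}, whose proof proceeds exactly as you describe --- Hoeffding's permutation-averaging of the U-statistic into sums over disjoint (hence independent) pairs, convexity of the trace exponential to dominate the matrix MGF of $U_n$ by that of a single independent block, and then Tropp's matrix Laplace-transform/Bernstein machinery. The only cosmetic slack is that for odd $n$ your exponent is really $\lfloor n/2\rfloor t^2/2=(n-1)t^2/4$ rather than $nt^2/4$ (and controlling the two-sided operator norm strictly costs $2d$ rather than $d$), the same harmless constant discrepancies present in the cited statement.
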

The proof of this lemma can be found in \cite{han2018eca}.

\begin{lemma}\label{lemma}
$\bX_{1},\ldots,\bX_{n}\in\mathbb{R}^{p}$ are samples from some elliptical distribution. Population and sample Kendall's tau matrix are denoted by $\textbf{K}$ and $\widehat{\textbf{K}}$, respectively. If $\log p=o(n)$, we have:
\begin{equation}\label{lemma1}
\Big\Vert \Vert \widehat{\textbf{K}}-\textbf{K}\Vert_{2} \Big\Vert_{\psi_1} \lesssim \sqrt{\frac{\log\,p}{n}}\notag.
\end{equation}
\end{lemma}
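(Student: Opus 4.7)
The plan is to apply the U-statistic concentration inequality in Lemma \ref{ecalemma} to the symmetric kernel
$$k(\bX, \bX') = \frac{(\bX - \bX')(\bX - \bX')^\top}{\Vert \bX - \bX' \Vert_2^2}$$
and then convert the resulting Bernstein-type tail bound into a bound on the Orlicz $\psi_1$ norm.

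First I would verify that the parameters $R_1$ and $R_2$ required by Lemma \ref{ecalemma} can be taken as universal constants, independent of $n$, $p$ and $\boldsymbol{\Sigma}$. The key observation is that $k(\bX, \bX')$ is almost surely a rank-one orthogonal projection (continuity of the elliptical distribution rules out $\bX = \bX'$), so $\Vert k(\bX, \bX') \Vert_2 = 1$ a.s.; the triangle inequality then yields $R_1 \leq 2$. For $R_2$, I would exploit the idempotency $k(\bX,\bX')^2 = k(\bX,\bX')$ of a rank-one projection. Combining this with $\mathbb{E}\,k = \textbf{K}$, a short expansion gives
$$\mathbb{E}\bigl\{k - \textbf{K}\bigr\}^2 \;=\; \mathbb{E}\,k^2 - \textbf{K}^2 \;=\; \textbf{K} - \textbf{K}^2,$$
whose eigenvalues are $\lambda_i(\textbf{K})(1 - \lambda_i(\textbf{K})) \leq 1/4$, so $R_2 \leq 1$. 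This structural fact is what allows the final bound to be free of any moment assumption on $\xi$.

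Plugging these constants into Lemma \ref{ecalemma} yields
$$\mathbb{P}\bigl(\Vert \widehat{\textbf{K}} - \textbf{K} \Vert_2 \geq t\bigr) \;\leq\; p\exp\!\left(-\frac{nt^2/4}{1 + 2t/3}\right),$$
which is sub-Gaussian on the scale $t \lesssim 1$ and sub-exponential for larger $t$. Using the tail-to-moment identity $\mathbb{E}Y^q = q\int_0^\infty t^{q-1}\mathbb{P}(Y \geq t)\,dt$ and splitting the integral at $t = 1$ (applying the sub-Gaussian bound on $[0,1]$ and the sub-exponential bound on $(1,\infty)$), a routine computation gives
$$\bigl(\mathbb{E}\Vert \widehat{\textbf{K}} - \textbf{K} \Vert_2^q\bigr)^{1/q} \;\lesssim\; \sqrt{\frac{q + \log p}{n}} + \frac{q + \log p}{n}$$
for every $q \geq 1$. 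Dividing by $q$ and taking the supremum, under the assumption $\log p = o(n)$, the maximum is attained near $q = 1$ at order $\sqrt{\log p / n}$, which is exactly the claimed $\psi_1$ bound.

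There is no deep obstacle once Lemma \ref{ecalemma} is in hand; the delicate point is the universality of $R_1$ and $R_2$, which rests entirely on the idempotency of the Kendall's tau kernel together with the elementary identity $\mathrm{Tr}(\textbf{K}) = \mathbb{E}\,\mathrm{Tr}(k) = 1$. The remaining tail-to-Orlicz conversion is standard and requires only that $\log p = o(n)$ so that the sub-Gaussian contribution $\sqrt{\log p/n}$ dominates the sub-exponential correction $\log p / n$.
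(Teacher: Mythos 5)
Your proposal is correct and follows essentially the same route as the paper: both identify the Kendall's tau kernel as a rank-one projection so that the parameters $R_1,R_2$ in the matrix U-statistic Bernstein inequality (Lemma \ref{ecalemma}) are universal constants (the paper bounds $\mathbb{E}\{k-\mathbf{K}\}^2$ by $\Vert\mathbf{K}\Vert_2+\Vert\mathbf{K}\Vert_2^2$ where you note the sharper $\mathbf{K}-\mathbf{K}^2$, but this is the same computation), and both then convert the resulting tail bound into the $\psi_1$ estimate. The only difference is cosmetic: the paper dominates the Bernstein tail by an explicit exponential $\exp(1-t/f(n,p))$ and invokes Vershynin's tail--Orlicz equivalence, whereas you integrate the tail to get moment growth $\sqrt{(q+\log p)/n}+(q+\log p)/n$ and take the supremum in the definition of $\Vert\cdot\Vert_{\psi_1}$; both are standard and valid under $\log p=o(n)$.
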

\begin{proof}

It is easily seen that $\text{Tr}(\textbf{K})=\text{Tr}(\widehat{\textbf{K}})=1$. Since $\textbf{K}$ and $\widehat{\textbf{K}}$ are positive semidefinite matrices, we have $\Vert \textbf{K} \Vert_{2}\leq 1,\Vert\widehat{\textbf{K}}\Vert_{2} \leq 1$, and thus $\Vert \textbf{K}-\widehat{\textbf{K}}\Vert_{2}\leq 2$.\par
Note that kernel function of Kendall's tau matrix is:
$$k(\bX_{i},\bX_{i'}):=\frac{(\bX_{i}-\bX_{i'})(\bX_{i}-\bX_{i'})^{T}}{\Vert \bX_{i}-\bX_{i'}\Vert_{2}^{2}}.
$$
By some simple algebra,
\begin{equation}
\begin{aligned}
\Vert \frac{(\bX_{i}-\bX_{i'})(\bX_{i}-\bX_{i'})^{T}}{\Vert \bX_{i}-\bX_{i'}\Vert_{2}^{2}}-\textbf{K}\Vert_{2}&\leq\Vert \frac{(\bX_{i}-\bX_{i'})(\bX_{i}-\bX_{i'})^{T}}{\Vert \bX_{i}-\bX_{i'}\Vert_{2}^{2}}\Vert_{2}+\Vert \textbf{K}\Vert_{2}\\
&=\text{Tr}(\frac{(\bX_{i}-\bX_{i'})(\bX_{i}-\bX_{i'})^{T}}{\Vert \bX_{i}-\bX_{i'}\Vert_{2}^{2}})+\Vert \textbf{K} \Vert_{2}=1+\Vert \textbf{K}\Vert_{2}\\
\Vert\mathbb{E}\{k(\bX_{i},\bX_{i'})-\mathbb{E}k(\bX_{i},\bX_{i'})\}^{2}\Vert_{2}&\leq\Vert \textbf{K}\Vert_{2}+\Vert \textbf{K}\Vert_{2}^{2}.
\notag
\end{aligned}
\end{equation}
According to Lemma \ref{ecalemma}, let $R_1=2,R_2=2$, then
\begin{equation}
\mathbb{P}(\Vert \widehat{\textbf{K}}-\textbf{K} \Vert_{2}\geq t) \leq p\,\exp\left(-\frac{(n/4)t^{2}}{2+2t/3}\right),\forall t \geq 0\notag
\end{equation}
We need to find a function $f(n,p)$ with respect to $n$ and $p$, which satisfies $f(n,p)\leq 2$ and
\begin{equation}\label{func}
p\,\exp\left(-\frac{(n/4)t^{2}}{2+2t/3}\right)\leq \exp(1-\frac{t}{f(n,p)}), \forall t\in\left[f(n,p),2\right]
\end{equation}
It is equivalent to:
\begin{equation}
\frac{1}{f(n,p)}\leq \mathop{\min}\limits_{t\in\left[f(n,p),2\right]}\left(\frac{1-\log\,p}{t}+\frac{nt/4}{2+2t/3}\right)\nonumber
\end{equation}
It is easy to see that the right term is an increasing function of $t$, so the last inequality above is equivalent to:
\begin{equation}
\left\{
\begin{aligned}
\frac{1}{f(n,p)} &\leq \frac{1-\log\,p}{f(n,p)}+\frac{nf(n,p)/4}{2+2f(n,p)/3}\\
f(n,p) &\leq 2
\end{aligned}
\right.\nonumber
\end{equation}
Let $f(n,p)=(\sqrt{\frac{36\log\,p}{n}})\wedge 2$,
it is easily to verify that it satisfies \eqref{func}.
Therefore,
\begin{equation}
\mathbb{P}(\Vert \widehat{\textbf{K}}-\textbf{K} \Vert_{2}\geq t) \leq exp(1-\frac{1}{(\sqrt{\frac{36\log\,p}{n}})\wedge 2})),\forall t \geq 0\notag
\end{equation}
By (5.14) in \cite{vershynin2010introduction}, we have
\begin{equation}
\Big\Vert \Vert \widehat{\textbf{K}}-\textbf{K} \Vert_{2} \Big\Vert_{\psi_1} \leq (\sqrt{\frac{36\log\,p}{n}})\wedge 2 \lesssim \sqrt{\frac{\log p}{n}}\notag
\end{equation}
\end{proof}

\section{Proof of the Main Theorems}\label{sec:theo}

\subsection*{Proof of Theorem 4.1}

\begin{proof}
According to Davis-Kahan theorem in \cite{yu2015useful}, we have:
\begin{equation}
\boldsymbol{\rho}(\widetilde{\bV}_{K},\bV_{K}^{*}) \lesssim \frac{\Vert \widetilde{\boldsymbol{\Sigma}}-\boldsymbol{\Sigma}^{*}\Vert_{F}}{\lambda_{K}(\boldsymbol{\Sigma}^{*})-\lambda_{K+1}(\boldsymbol{\Sigma}^{*})}\notag
\end{equation}

\begin{equation}
\begin{aligned}
\Big\Vert \boldsymbol{\rho}(\widetilde{\bV}_{K},\bV_{K}^{*}) \Big\Vert_{\psi_1} \leq \sqrt{2} \Big\Vert \Vert& \widetilde{\boldsymbol{\Sigma}}-\boldsymbol{\Sigma}^{*}\Vert_{F} \Big\Vert_{\psi_1}/\delta=\sqrt{2}\Big\Vert \Vert \frac{1}{m} \sum_{i=1}^{m}\bV_{K}^{(i)}\bV_{K}^{(i)T}-\boldsymbol{\Sigma}^{*}\Vert_{F} \Big\Vert_{\psi_1}/\delta \leq\\
&\sqrt{2}\Big\Vert \frac{1}{m} \sum_{i=1}^{m}\Vert \bV_{K}^{(i)}\bV_{K}^{(i)T}-\boldsymbol{\Sigma}^{*}\Vert_{F} \Big\Vert_{\psi_1}/\delta\notag
\end{aligned}
\end{equation}
Note that $\Vert \bV_{K}^{(i)}\bV_{K}^{(i)T}-\boldsymbol{\Sigma}^{*} \Vert_{F} \leq k+\Vert \boldsymbol{\Sigma}^{*} \Vert_{F}<\infty$ is a bounded random variable, thus it is sub-Gaussian, naturally with finite Orlizc $\psi_1$ norm. Hence according to \textbf{Lemma 4} in \cite{fan2019distributed},
\[
\begin{aligned}
\Big\Vert \frac{1}{m} \sum_{i=1}^{m}\Vert \bV_{K}^{(i)}\bV_{K}^{(i)T}-\boldsymbol{\Sigma}^{*}\Vert_{F} \Big\Vert_{\psi_1} &\lesssim \sqrt{\frac{\sum_{i=1}^{m}(\Big\Vert \Vert \widehat{\bV}_{K}^{(i)}\widehat{\bV}_{K}^{(i)T}-\boldsymbol{\Sigma}^{*}\Vert_{F} \Big\Vert_{\psi_1})^{2}}{m^{2}}}\\
&\leq\frac{1}{\sqrt{m}}\max\limits_{l}\Big\Vert \Vert \widehat{\bV}_{K}^{(l)}\widehat{\bV}_{K}^{(l)T}-\boldsymbol{\Sigma}^{*}\Vert_{F} \Big\Vert_{\psi_1}.\\
\end{aligned}
\]
In fact, we have
\begin{equation}
\begin{aligned}
\Big\Vert \Vert \widehat{\bV}_{K}^{(i)}\widehat{\bV}_{K}^{(i)T}-\boldsymbol{\Sigma}^{*}\Vert_{F} \Big\Vert_{\psi_1}&\leq \Big\Vert \Vert \widehat{\bV}_{K}^{(i)}\widehat{\bV}_{K}^{(i)T}-\bV_{K}\bV_{K}^{T}\Vert_{F} \Big\Vert_{\psi_1}+\Vert \boldsymbol{\Sigma}^{*}-\bV_{K}\bV_{K}^{T}\Vert_{F}\\
&\leq\Vert \boldsymbol{\rho}(\widehat{\bV}_{K}^{(i)},\bV_{K})\Vert_{\psi_1}+\mathbb{E}(\Vert \widehat{\bV}_{K}^{(i)}\widehat{\bV}_{K}^{(i)T}-\bV_{K}\bV_{K}^{T}\Vert_{F})\\
&\leq 2\Vert \boldsymbol{\rho}(\bV_{K},\widehat{\bV}_{K}^{(i)})\Vert_{\psi_1}\notag, \forall i\in [m],
\end{aligned}
\end{equation}
where the second inequality is by Jensen's inequality.
Combining
\begin{equation}
\boldsymbol{\rho}(\bV_{K},\widehat{\bV}_{K}^{(i)})=\Vert \widehat{\bV}_{K}^{(i)}\widehat{\bV}_{K}^{(i)T}-\bV_{K}\bV_{K}^{T}\Vert_{F}=\sqrt{2}\Vert sin\Theta(\widehat{\bV}_{K}^{(i)},\bV_{K})\Vert_{F}\notag,
\end{equation}
and the generalized \textbf{Davis-Kahan} theorem in \cite{yu2015useful}, which is
\begin{equation}
\Vert sin\Theta(\widehat{\bV}_{K}^{(i)},\bV_{K})\Vert_{F} \leq \frac{2\sqrt{K}\Vert \widehat{\textbf{K}}-\textbf{K}\Vert_{2}}{\lambda_{K}(\textbf{K})-\lambda_{K+1}(\textbf{K})}\notag,
\end{equation}
we have
\begin{equation}
\Vert \boldsymbol{\rho}(\bV_{K},\widehat{\bV}_{K}^{(i)})\Vert_{\psi_1}\leq \frac{2\sqrt{2K}}{\lambda_{K}(\textbf{K})-\lambda_{K+1}(\textbf{K})}\Big\Vert \Vert \widehat{\textbf{K}}-\textbf{K}\Vert_{2}\Big\Vert_{\psi_1}\notag,
\end{equation}
Hence, combining all of the equations above and Lemma \ref{lemma}, we have
\begin{equation}\label{equ}
\Vert \boldsymbol{\rho}(\widehat{\bV}_{K},\bV_{K}^{*})\Vert_{\psi_1}\leq 48\sqrt{\frac{K\log\,p}{m\cdot n}}\frac{1}{\delta\cdot(\lambda_{K}(\textbf{K})-\lambda_{K+1}(\textbf{K}))}.
\end{equation}
According to Lemma \ref{auxlem1} and equation \eqref{equ}, it is not hard to show that
\begin{equation}
\Vert \boldsymbol{\rho}(\widehat{\bV}_{K},\bV_{K}^{*})\Vert_{\psi_1}\lesssim \kappa r K^{3/2}(\frac{\log p}{N})^{1/2}\notag.
\end{equation}
\end{proof}

\subsection*{Proof of Theorem 4.2}

\begin{proof}
Define $\textbf{E}=\widehat{\textbf{K}}-\textbf{K}$, $\Delta=\lambda_{K}(\textbf{K})-\lambda_{K+1}(\textbf{K})$. According to Theorem 3 in \cite{fan2019distributed}, we have
\begin{equation}
\begin{aligned}
&\boldsymbol{\rho}(\bV_{K}^{*},\bV_{K}) \leq \Vert \mathbb{E}(\widehat{\bV}_{K}^{(1)}\widehat{\bV}_{K}^{(1)T})-\bV_{K}\bV_{K}^{T} \Vert_{F}\lesssim \sqrt{K}\mathbb{E}(\frac{\Vert \textbf{E} \Vert_{2}}{\Delta})^{2}\\
&\lesssim \sqrt{K}\Delta^{-2}\mathbb{E}\Big\Vert \Vert \textbf{E} \Vert_{2} \Big\Vert_{\psi_1}^{2}\lesssim \kappa^{2}r^{2}K^{5/2}\frac{m\log p}{N}\notag.
\end{aligned}
\end{equation}
The final inequality is derived from Lemma \ref{lemma}.
\end{proof}

\section{Additional Simulation results}\label{sec:C}
Figure \ref{fig:3}  show the errors of D-PCA, D-ECA and F-ECA  with   varying $m$  and and fixed $p=50$ under different distributions. It is obvious from the Figure \ref{fig:3}  that the error of D-PCA is much higher than that of D-ECA and F-ECA under the $t$-distribution and the differences increase as the degrees of freedom declines, which again proves distributed ECA is a much more robust method than distributed PCA. Meanwhile, the errors of D-ECA and F-ECA are very close. Thus we confirm the conclusion that distributed ECA is a computation efficient method without losing much accuracy.  In conclusion, when data are stored across different machines due to various kinds of reasons including communication cost,  privacy, data security, our D-ECA can be regarded as an alternative to ECA based on full-samples.
 \begin{figure}[!h]
  \centering
  \includegraphics[width=16cm]{p=50.png}
 \caption{The  errors ($\rho_1(\bL,\widehat\bL)$) versus the number of machines $m$ for distributed PCA, distributed ECA and full sample ECA under different distributions with fixed $p=50$.}\label{fig:3}
 \end{figure}
\end{document}